\newcommand{\Lang}{\mathsf{L}}
\newcommand{\Nset}{\mathbb N}
\newcommand{\X}{{\ensuremath{\mathbf{X}}}}
\newcommand{\F}{{\ensuremath{\mathbf{F}}}}
\newcommand{\G}{{\ensuremath{\mathbf{G}}}}
\newcommand{\U}{{\ensuremath{\mathbf{U}}}}
\newcommand{\true}{{\ensuremath{\mathbf{tt}}}}
\newcommand{\false}{{\ensuremath{\mathbf{ff}}}}
\newcommand{\gsf}{{\ensuremath{\mathbb{G}}}}
\newcommand{\tran}[1]{\stackrel{#1}{\longrightarrow}}
\newcommand{\A}{\mathcal{A}}
\newcommand{\setG}{\mathcal{G}}
\newcommand{\K}{\mathcal{M}}
\renewcommand{\P}{\mathcal{P}}
\newcommand{\ka}{Mojmir\xspace}
\newcommand{\aft}{{\it af}}
\newcommand{\aftg}{{\it af}_\G}
\newcommand{\Reach}{{\it Reach}}
\newcommand{\Reachg}{{\it Reach}_\G}
\newcommand{\R}{\mathcal{R}}
\newcommand{\rank}[1]{\mathit{rk}(#1)}
\newcommand{\rankf}[1]{\mathcal{F}(#1)}
\newcommand{\ind}[2]{\mathit{ind}(#1,#2)}
\newcommand{\M}{\mathcal{T}}
\newcommand{\theoremlike}[2]{\par\medskip\penalty-250\refstepcounter{theorem}{{\bfseries\noindent#2
\ref{#1}.}}}
\newcommand{\thmhelperpre}[2]{\theoremlike{#1}{#2}}
\newcommand{\thmhelperpost}{\par\medskip}
\newenvironment{reftheorem}[1]{\thmhelperpre{#1}{Theorem}}{\thmhelperpost}
\newenvironment{reflemma}[1]{\thmhelperpre{#1}{Lemma}}{\thmhelperpost}
\title{From LTL to Deterministic Automata:\\ A Safraless Compositional Approach}
\author{Javier Esparza and Jan K\v ret\'insk\'y
\thanks{This research was funded in part by the European Research Council (ERC) under grant agreement 267989
(QUAREM) and by the Austrian Science Fund (FWF) project S11402-N23 (RiSE). The author is on leave from Faculty of Informatics, Masaryk University, Czech Republic, and partially supported by the Czech Science Foundation, grant No.~P202/12/G061.}\thanks{
In the original paper published in CAV'14, the negative conjuncts in Theorem~\ref{thm:master} were missing.
We thank Salomon Sickert for his help in correcting the theorem and its proof. 
}}
\institute{Institut f\"ur Informatik, Technische Universit\"at M\"unchen, Germany\\
IST Austria
}
\newcommand{\myspace}{\vspace*{-1em}}
\newcommand{\myspaceb}{\vspace*{-0.5em}}
\begin{document}

\tikzset{
    state/.style={
		rectangle,
            rounded corners,
            draw=black,
            minimum height=2em,
            minimum width=2em,
            inner sep=4pt,
            text centered,
            }
}

\maketitle 

\pagestyle{plain}

\begin{abstract}
We present a new algorithm to construct a (generalized) deterministic Rabin automaton for
an LTL formula $\varphi$. The automaton is the product of a master automaton and
an array of slave automata, one for each $\G$-subformula of $\varphi$. The slave
automaton for $\G\psi$ is in charge of recognizing whether $\F\G\psi$ holds.
As opposed to standard determinization procedures, the states of all our automata have a clear logical structure, which allows for
various optimizations. Our construction subsumes former algorithms for fragments of LTL. Experimental results show improvement in the sizes of the resulting automata compared to existing methods.
\end{abstract}

\section{Introduction}

Linear temporal logic (LTL) is the most popular specification language for linear-time properties. 
In the automata-theoretic approach to LTL verification, formulae are translated into
$\omega$-automata, and the product of these automata with the system is analyzed. Therefore, 
generating small $\omega$-automata is crucial for the efficiency of the approach. 

In quantitative probabilistic verification, LTL formulae need to be translated into \emph{deterministic} 
$\omega$-automata \cite{DBLP:books/daglib/0020348,cav13}. Until recently, this required to proceed in two steps: first 
translate the formula into a non-deterministic B\"uchi automaton (NBA), and then 
apply Safra's construction  \cite{DBLP:conf/focs/Safra88}, or improvements on it  
\cite{DBLP:conf/lics/Piterman06,DBLP:conf/fossacs/Schewe09} to transform the NBA into a deterministic automaton 
(usually a Rabin automaton, or DRA). This is also the approach adopted in \texttt{PRISM}~\cite{prism}, 
a leading probabilistic model checker, which reimplements the optimized Safra's construction 
of \texttt{ltl2dstar}~\cite{ltl2dstar}. 

In \cite{cav12} we presented an algorithm that {\em directly} constructs a generalized DRA (GDRA) for
the fragment of LTL containing only the temporal operators $\F$ and $\G$. The GDRA can be either (1) 
degeneralized into a standard DRA, or (2) used directly in the probabilistic verification process \cite{cav13}. 
In both cases we get much smaller automata for many formulae. For instance, the standard approach
translates a conjunction of three fairness constraints into an automaton 
with over a million states, while the algorithm of \cite{cav12} yields 
a GDRA with one single state (when acceptance is defined on transitions), and a DRA with 462 states. 
In \cite{atva12,atva13} our approach was extended to larger fragments of LTL 
containing the $\X$ operator and restricted appearances of $\U$, but
a general algorithm remained elusive.

In this paper we present a novel approach able to handle full LTL,
and even the alternation-free linear-time $\mu$-calculus. The approach 
is {\em compositional}: the automaton is obtained as a parallel composition
of automata for different parts of the formula,  running in lockstep\footnote{We could 
also speak of a product of automata, but the operational view behind the term
parallel composition helps to convey the intuition.}.
More specifically, the automaton is the parallel composition of a \emph{master automaton}
and an array of \emph{slave automata}, one for each $\G$-subformula of the original formula, 
say $\varphi$. Intuitively, the master monitors the formula that remains to be fulfilled 
(for example, if $\varphi = (\neg a \wedge \X a) \vee \X\X\G a$, then the remaining formula
after $\emptyset \{a\}$ is $\true$, and after $\{a\}$ it is $\X\G a$),
and takes care of checking safety and reachability properties. The slave for
a subformula $\G\psi$ of $\varphi$ checks whether $\G\psi$ {\em eventually} holds, 
i.e., whether $\F\G\psi$ holds. It also monitors the formula that remains 
to be fulfilled, but only partially: more precisely, it does not monitor any
$\G$-subformula of $\psi$, as other slaves are responsible for them.
For instance, if $\psi = a \wedge \G b \wedge \G c $, then the slave for $\G\psi$ only checks that eventually 
$a$ always holds, and ``delegates'' checking $\F\G b $ and $\F\G c$ to other slaves. Further, 
and crucially, the slave may provide the information that not only $\F\G\psi$, but a stronger
formula holds; the master needs this to decide that, for instance, not only $\F\G\varphi$ 
but even $\X\G\varphi$ holds. 

The acceptance condition of the parallel composition of master and slaves is 
a disjunction over all possible subsets of $\G$-subformulas, and all possible
stronger formulas the slaves can check. The parallel composition accepts 
a word with the disjunct corresponding to the subset of formulas which hold in it.

The paper is organized incrementally. In Section \ref{sec:rabinization} we show
how to construct a DRA for a formula $\F\G\varphi$, where $\varphi$ has no occurrence
of $\G$. This gives the DRA for a bottom-level slave.
Section \ref{sec:gen-FG} constructs a DRA for an arbitrary formula $\F\G\varphi$,
which gives the DRA for a general slave, in charge of a formula 
that possible has $\G$-subformulas. Finally, Section \ref{sec:gen} constructs a DRA for 
arbitrary formulas by introducing the master and its parallel composition with
the slaves. Full proofs can be found in Appendix. 

\myspace 

\paragraph{Related work} 
There are many constructions translating 
LTL to NBA, e.g., \cite{DBLP:conf/fm/Couvreur99,DBLP:conf/cav/DanieleGV99,DBLP:conf/concur/EtessamiH00,DBLP:conf/cav/SomenziB00,ltl2ba,DBLP:conf/forte/GiannakopoulouL02,DBLP:conf/wia/Fritz03,ltl3ba,spot-atva13}. The one recommended by ltl2dstar and used in PRISM is LTL2BA \cite{ltl2ba}.
Safra's construction with optimizations described in \cite{DBLP:conf/wia/KleinB07} has been implemented in 
ltl2dstar \cite{ltl2dstar}, and reimplemenetd in PRISM \cite{prism}. 
A comparison of LTL translators into deterministic $\omega$-automata can be found in 
\cite{DBLP:conf/lpar/BlahoudekKS13}.

\section{Linear Temporal Logic}
\label{sec:prelim}

In this paper, $\Nset$ denotes the set of natural numbers including zero.
``For almost every $i \in \mathbb{N}$'' means for all but finitely many $i \in \mathbb{N}$.

\medskip

This section recalls the notion of linear
temporal logic (LTL). 
We consider the negation normal form and we have the future operator explicitly in the syntax:

\begin{definition}[LTL Syntax]\label{def:ltl-syn}
The formulae of the linear temporal logic (LTL)
are given by the following syntax:
\begin{align*}
\varphi::= & \true \mid \false \mid a\mid \neg a\mid \varphi\wedge\varphi \mid \varphi\vee\varphi \mid \X\varphi \mid \F\varphi \mid \G\varphi \mid \varphi\U\varphi
\end{align*} 
over a finite fixed set $Ap$ of atomic propositions.
\end{definition}

\begin{definition}[Words and LTL Semantics]
Let $w\in (2^{Ap})^\omega$ be a word. The $i$th letter of $w$ is denoted $w[i]$, i.e.~$w=w[0]w[1]\cdots$. 
We write $w_{ij}$ for the finite word $w[i] w[i+1] \cdots w[j]$, and $w_{i\infty}$
or just $w_i$ for the suffix $w[i] w[i+1] \cdots $.

The semantics of a formula on a word $w$ is defined inductively as follows: 
$$\begin{array}[t]{lclclcl}
  w \models \true                 \\
  w \not\models \false            \\
  w \models a & \iff & a \in w[0] \\
  w \models \neg a & \iff & a \notin w[0] \\
  w \models \varphi \wedge \psi & \iff & w \models \varphi \text{ and } w \models \psi\\
  w \models \varphi \vee \psi & \iff & w \models \varphi \text{ or } w \models \psi
\end{array}
\quad
\begin{array}[t]{lcl}
w \models \X \varphi & \iff & w_1 \models \varphi \\
w \models \F \varphi & \iff & \exists \, k\in\Nset: w_k \models \varphi \\
w \models \G \varphi & \iff & \forall \, k\in\Nset: w_k \models \varphi\\
w \models \varphi\U \psi & \iff &
\begin{array}[t]{l}
\exists \, k\in\Nset: w_k \models \psi \text{ and } \\
\forall\, 0\leq j < k: w_j\models \varphi
\end{array}
\end{array}$$
\end{definition}

\begin{definition}[Propositional implication]
Given two formulae $\varphi$ and $\psi$, we say that $\varphi$ 
{\em propositionally implies} $\psi$, denoted by  $\varphi \models_p \psi$, 
if we can prove $\varphi \models \psi$ using only the axioms of propositional logic.
We say that $\varphi$ and $\psi$ are {\em propositionally equivalent}, denoted
by $\varphi \equiv_p \psi$, if $\varphi$ and $\psi$ propositionally imply each other.
\end{definition}

\begin{remark}
We consider formulae up to propositional equivalence, i.e., $\varphi = \psi$ means 
that $\varphi$ and $\psi$ are propositionally equivalent. Sometimes (when there is risk of confusion)
we explicitly write $\equiv_p$ instead of $=$.
\end{remark}

\subsection{The formula $\aft(\varphi, w)$}

Given a formula $\varphi$ and a finite word $w$, we define a formula
$\aft(\varphi,w)$, read ``$\varphi$ after $w$''. Intuitively, it
is the formula that any infinite continuation $w'$ must satisfy for $ww'$ to satisfy
$\varphi$.
 
\begin{definition}
Let $\varphi$ be a formula and $\nu \in 2^{Ap}$. We define the formula
$\aft(\varphi,\nu)$ as follows: 
$$\begin{array}[t]{lcl}
\aft(\true,\nu)& = & \true \\
\aft(\false,\nu)& = & \false \\
\aft(a,\nu)& = & \left\{
\begin{array}{ll} 
\true & \mbox{if $a \in \nu$} \\ 
\false & \mbox{if $a \notin \nu$}\end{array}\right. \\
\aft(\neg a,\nu)&=& \neg \aft(a,\nu) \\
\aft(\varphi\wedge\psi,\nu)&=& \aft(\varphi,\nu)\wedge\aft(\psi,\nu) \\
\aft(\varphi\vee\psi,\nu)&=& \aft(\varphi,\nu)\vee\aft(\psi,\nu) \\
\end{array}
\quad
\begin{array}[t]{lcl}
\aft(\X\varphi,\nu)& = &  \varphi\\
\aft(\G\varphi,\nu)&= & \aft(\varphi,\nu)\wedge \G\varphi\\ 
\aft(\F\varphi,\nu)&= & \aft(\varphi,\nu)\vee \F\varphi\\
\aft(\varphi\U\psi,\nu)&=& \aft(\psi,\nu)\vee(\aft(\varphi,\nu)\wedge \varphi\U\psi)
\end{array}$$
We extend the definition to finite words as follows:
$\aft(\varphi, \epsilon) = \varphi$ and $\aft(\varphi, \nu w) = \aft(\aft(\varphi,\nu),w)$.
Finally, we define $\Reach(\varphi) = \{ \aft(\varphi,w) \mid w \in (2^{Ap})^* \}$.
\end{definition}

\begin{example}
\label{ex:aft}
Let ${\it Ap} = \{a,b,c\}$, and consider the formula $\varphi = a \vee (b \; \U \; c) $. For example, we have $\aft(\varphi, \{a\}) = \true$ $\aft(\varphi, \{b\}) = (b \; \U \; c)$, $\aft(\varphi, \{c\}) = \true$, and $\aft(\varphi, \emptyset) = \false$. $\Reach(\varphi) = \{ \varphi, \alpha \wedge \varphi, \beta \vee \varphi, \true, \false\}$,
and $\Reach(\varphi)=\{a \vee (b \; \U \; c), (b \; \U \; c), \true, \false \}$.
\end{example}


\begin{lemma}
\label{lem:fundaft}
Let $\varphi$ be a formula, and let $ww' \in (2^{Ap})^\omega$ be an arbitrary word. 
Then $ww' \models \varphi$ if{}f $w' \models \aft(\varphi,w)$.
\end{lemma}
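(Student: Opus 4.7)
The plan is to prove the lemma by a double induction: an outer induction on the length of the finite prefix $w$, and for the single-letter case, a structural induction on $\varphi$.

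For the outer induction, the base case $w = \epsilon$ is immediate, since $\aft(\varphi,\epsilon) = \varphi$. For the inductive step, write $w = \nu u$ with $\nu \in 2^{Ap}$ and $u$ shorter. Then $\aft(\varphi, \nu u) = \aft(\aft(\varphi,\nu), u)$ by definition. Applying the induction hypothesis to $u$ and to the formula $\aft(\varphi,\nu)$ gives $uw' \models \aft(\varphi,\nu)$ iff $w' \models \aft(\aft(\varphi,\nu),u)$. It therefore suffices to establish the single-letter case: $\nu u w' \models \varphi$ iff $uw' \models \aft(\varphi,\nu)$, and then setting $w'' = uw'$ closes the step.

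The real content is thus the single-letter statement, proved by structural induction on $\varphi$. The propositional and $\X$ cases are immediate from the definition of $\aft$ and the semantics (noting that $\aft(a,\nu)$ returns $\true$ or $\false$ exactly when $a \in \nu$ or $a \notin \nu$). For the temporal operators, I would rely on the standard unfolding equivalences that are already built into the definition of $\aft$. Concretely, for $\G\varphi$: $\nu w'' \models \G\varphi$ iff $\nu w'' \models \varphi$ and $w'' \models \G\varphi$; by the structural IH applied to the subformula $\varphi$ and the letter $\nu$, the first conjunct is equivalent to $w'' \models \aft(\varphi,\nu)$, so the whole thing is equivalent to $w'' \models \aft(\varphi,\nu) \wedge \G\varphi$, which is $\aft(\G\varphi,\nu)$ by definition. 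The $\F\varphi$ case is symmetric, unfolding $\F\varphi \equiv \varphi \vee \X\F\varphi$. For $\varphi \U \psi$, one unfolds $\varphi \U \psi \equiv \psi \vee (\varphi \wedge \X(\varphi \U \psi))$ in the semantics and then applies the IH to $\varphi$ and $\psi$, matching the definition $\aft(\varphi \U \psi, \nu) = \aft(\psi,\nu) \vee (\aft(\varphi,\nu) \wedge \varphi \U \psi)$.

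I expect no serious obstacle: every case is a direct match between the standard one-step unfolding of an LTL operator and the corresponding clause in the definition of $\aft$. The only bookkeeping issue is that the lemma is stated up to propositional equivalence ($=$ means $\equiv_p$), but since $\equiv_p$ implies semantic equivalence and all identifications in $\aft$ are propositional (e.g. $\aft(\neg a,\nu) = \neg\aft(a,\nu)$, $\aft$ distributes over $\wedge$ and $\vee$), this causes no difficulty: we can pick any propositional representative when evaluating $w'' \models \aft(\varphi,\nu)$.
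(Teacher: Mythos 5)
Your proposal is correct and matches the paper's approach: the paper dispatches this lemma with a one-line ``straightforward induction on the length of $w$,'' which is exactly the outer induction you set up, and your inner structural induction on $\varphi$ (matching each one-step unfolding of a temporal operator against the corresponding clause of $\aft$) is the standard way to fill in the details the paper leaves implicit.
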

\begin{proof}
Straightforward induction on the length of $w$.
\qed\end{proof}

\section{DRAs for simple $\F\G$-formulae}
\label{sec:rabinization}

We start with formulae $\F\G\varphi$ where $\varphi$ is $\G$-free, i.e., contains no occurrence of $\G$. 
The main building block of our paper is a procedure to 
construct a DRA recognizing $\Lang(\F\G\varphi)$. (Notice that
even the formula $\F\G a$ has no deterministic B\"uchi automaton.)
We proceed in two steps. First we introduce {\ka automata} and construct a 
{\ka automaton} that clearly recognizes $\Lang(\F\G\varphi)$.
We then show how to transform {\ka automata} into equivalent DRAs.

A \ka automaton\footnote{Named in honour of Mojm\'ir K\v ret\'insk\'y, father of one of the authors} is a deterministic automaton that,
at each step, puts a fresh token in the initial state, and moves all 
older tokens according to the transition function. 
The automaton accepts if all but finitely many tokens eventually reach an accepting state. 

\begin{definition}
A {\ka automaton} $\K$ over an alphabet $\Sigma$ is a tuple 
$(Q,i,\delta, F)$, where $Q$ is a set of states, 
$i \in Q$ is the initial state, 
$\delta \colon Q \times \Sigma \rightarrow Q$ is a transition function, and 
$F \subseteq Q$ is a set of accepting states satisfying $\delta(F,\Sigma)\subseteq F$,
i.e., states reachable from final states are also final.

The {\em run} of $\K$ over a word $w[0]w[1]\cdots \in (2^{Ap})^\omega$
is the infinite sequence  $(q^0_0)
(q^1_0,q^1_1)
(q^2_0,q^2_1,q^2_2)\cdots$ such that 

$$q^{\mathit{step}}_{\mathit{token}}=
\begin{cases}
i&\text{if } \mathit{token}=\mathit{step},\\
\delta(q^{\mathit{step}-1}_{\mathit{token}},w[{\mathit{step}-1}])&\text{ if } \mathit{token}<\mathit{step} 
\end{cases}$$
\noindent A run is accepting if for almost every $\mathit{token} \in \mathbb{N}$
there exists $\mathit{step}\geq \mathit{token}$ such that $q^{\mathit{step}}_{\mathit{token}}\in F$.
\end{definition}

Notice that if two tokens reach the same state at the same 
time point, then from this moment on they ``travel together''.

The \ka automaton for a formula $\varphi$ has formulae as states.
The automaton is constructed so that, when running on a word $w$, 
the $i$-th token ``tracks'' the formula that must hold for $w_i$ to satisfy 
$\varphi$. That is, after $j$ steps the $i$-th token is on the 
formula $\aft(\varphi, w_{ij})$. There is only one accepting state here, namely the one
propositionally equivalent to $\true$. Therefore, if the $i$-th token reaches
an accepting state, then $w_i$ satisfies $\varphi$.

\begin{definition}
\label{def:slave1}
Let $\varphi$ be a $\G$-free formula. 
The \ka automaton for $\varphi$  is 
$\K(\varphi)=(\Reach(\varphi),\varphi,\aft, \{\true\})$.
\end{definition}
\begin{example}
Figure \ref{fig:ka-dra} on the left shows the \ka automaton for the formula 
$\varphi = a \vee (b \; \U \; c)$. The notation for transitions is standard: $q_1 \tran{a+\bar{a}c} q_3$ means that there is a transitions from $q_1$ to $q_3$ for each
subset of $2^{{\it Ap}}$ that contains $a$, or does not contain $a$ and contains $c$.
\end{example}

\begin{figure}
\myspace\myspace\myspace
\begin{center}
\begin{minipage}{4cm}
\begin{tikzpicture}[x=3cm,y=1.5cm,font=\footnotesize,initial text=,outer sep=1mm]
\tikzstyle{acc}=[double]
\node[state,initial] (a) at (0,0) {$q_1: a \vee (b \, \U \, c)$};
\node[state] (b) at (0,-1) {$q_2: b \, \U \, c$};

\node[state,acc] (c) at (-0.6,-2) {$q_3:\true$};
\node[state] (d) at (0.6,-2) {$q_4:\false$};

\path[->] (a) edge node[right]{$\bar{a}b\bar{c}$} (b)
              edge [bend right= 30] node[left]{$a+\bar{a}c$} (c)
              edge [bend left=30] node[right]{$\bar{a}\bar{b}\bar{c}$} (d)
          (b) edge[loop left, max distance=6mm,in=16,out=349,looseness=15] node[below]{$\; b\bar{c}$} (b)
              edge node[right] {$c$} (c)
              edge node[left] {$\bar{b}\bar{c}$} (d)
	  (c) edge[loop left, max distance=6mm,in=168,out=192,looseness=15] node[left]{{\it true}} (c)
          (d) edge[loop left, max distance=6mm,in=12,out=353,looseness=15] node[right]{{\it true}} (d)
;
;
\end{tikzpicture}
\end{minipage}
\qquad\qquad\qquad\qquad\qquad
\begin{minipage}{4.0cm}
\begin{tikzpicture}[x=3cm,y=1.5cm,font=\footnotesize,initial text=,outer sep=1mm]
\tikzstyle{acc}=[double]
\node at (0,0.4){};
\node[state,initial] (1) at (0,0) {$(\mathbf{1},\bot)$};
\node[state] (2) at (0,-1.5) {$(\mathbf{2},\mathbf{1})$};

\path[->]  (1) edge[loop left, max distance=6mm,in=16,out=349,looseness=15] node[right]{$\begin{array}{l}t_1\colon a+\bar{a}c\\
t_2\colon \bar{a}\bar{b}\bar{c}\end{array}$} (1)
          (1) edge [bend right= 15] node[left]{$t_3 \colon \bar{a}b\bar{c}$} (2)
(2) edge [bend right= 15] node[right]{$\begin{array}{l}t_6\colon c\\t_7\colon a\bar{b}\bar{c}\\t_8 \colon \bar{a}\bar{b}\bar{c}\end{array}$} (1)
(2) edge[loop left, max distance=6mm,in=16,out=349,looseness=15] node[right]{$\begin{array}{l}t_4 \colon ab\bar{c}\\ t_5 \colon \bar{a}b\bar{c}\end{array}$} (2)
;
\end{tikzpicture}
\end{minipage}
\end{center}
\myspace\myspace
\caption{A \ka automaton for $a \vee (b \; \U \; c)$ and its corresponding DRA.}
\label{fig:ka-dra}
\myspace
\end{figure}
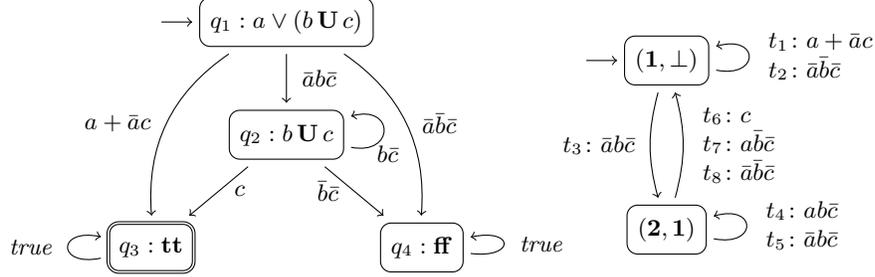

Since $\K(\varphi)$ accepts if{}f almost every token eventually 
reaches an accepting state, $\K(\varphi)$ accepts a word $w$ if{}f 
$w \models \F\G\varphi$. 

\begin{lemma}
\label{lem:aux}
Let $\varphi$ be a $\G$-free formula and let $w$ be a word. Then  $w \models \varphi$
if{}f $\aft(\varphi, w_{0i}) = \true$  for some $i \in \mathbb{N}$. 
\end{lemma}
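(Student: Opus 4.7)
My plan is to prove the two directions separately.

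For the ``if'' direction, I would apply Lemma~\ref{lem:fundaft} to the decomposition $w = w_{0i}\cdot w_{i+1,\infty}$: if $\aft(\varphi,w_{0i}) = \true$, then $w \models \varphi$ iff $w_{i+1,\infty} \models \true$, which holds trivially.

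For the ``only if'' direction I would proceed by induction on the structure of $\varphi$. Because $\varphi$ is $\G$-free, the $\G$-case does not arise. The propositional base cases ($\true,\false,a,\neg a$) are immediate from the definition of $\aft$ with $i=0$. For the Boolean connectives $\wedge,\vee$, one combines the two witness indices from the inductive hypothesis by taking their maximum; this is legitimate because $\aft(\true,\nu)=\true$, so once a subformula has been reduced to $\true$ it persists under further prefix extensions. For $\X\psi$ one observes that $\aft(\X\psi,w_{0,j+1}) = \aft(\psi,w_{1,j+1})$ and applies the inductive hypothesis to $\psi$ on the shifted word $w_{1,\infty}$.

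The interesting cases are $\F\psi$ and $\psi_1\U\psi_2$, whose $\aft$ rules retain the temporal operator rather than strip it. To handle them I would first establish by induction on the prefix length $i$ an unfolding identity
\[
\aft(\F\psi,\,w_{0i}) \;\equiv_p\; \F\psi \;\vee\; \bigvee_{l=0}^{i}\aft(\psi,\,w_{li}),
\]
and an analogous identity for $\U$. Granted this, if $w \models \F\psi$ with witness $k$ so that $w_k \models \psi$, the inductive hypothesis applied to $\psi$ and $w_k$ yields $j$ with $\aft(\psi,w_{k,k+j})=\true$; choosing $i=k+j$ makes the $l=k$ disjunct equal to $\true$ and hence the whole formula propositionally equivalent to $\true$. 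The $\U$ case is handled by the same strategy applied to the first index where $\psi_2$ holds.

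The main obstacle is proving the unfolding identities for $\F$ and $\U$. These are routine inductions on $i$ using the one-step definition of $\aft$ and the distributivity of $\aft$ over the Boolean connectives; for $\U$ one additionally has to carry along a conjunction of $\aft(\psi_1,\cdot)$ terms encoding ``$\psi_1$ has held so far,'' but since the proof only needs one disjunct to turn $\true$, these conjuncts cause no real trouble beyond bookkeeping.
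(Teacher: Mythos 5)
Your proposal is correct and follows essentially the same route as the paper: the ``if'' direction is immediate from Lemma~\ref{lem:fundaft}, and the ``only if'' direction is a structural induction in which the witness indices for Boolean connectives are combined by taking the maximum (using persistence of $\true$ under $\aft$), and the $\F$ and $\U$ cases use exactly the unfolding identity $\aft(\F\psi,w_{0i}) = \bigvee_{l=0}^{i}\aft(\psi,w_{li}) \vee \F\psi$ so that a single disjunct turning $\true$ collapses the whole formula. The only cosmetic difference is that you propose to prove the unfolding identities by a separate induction on the prefix length, whereas the paper simply asserts them ``by the definition of $\aft$.''
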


\begin{theorem}
\label{thm:kretcorrect}
Let $\varphi$ be a $\G$-free formula. Then $\Lang(\K(\varphi)) = \Lang(\F\G\varphi)$.
\end{theorem}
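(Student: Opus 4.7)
The plan is to translate the acceptance condition of $\K(\varphi)$ directly into the semantic condition for $\F\G\varphi$, using Lemma \ref{lem:aux} as the key bridge. The two ingredients I need are (i) an explicit description of where each token sits during the run, and (ii) the characterisation of $\F\G\varphi$ as ``for almost every $i$, $w_i \models \varphi$''.

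First, I would establish by a straightforward induction on $j - i$ that during the run of $\K(\varphi)$ on $w$, the $i$-th token after $j \geq i$ steps is in state $\aft(\varphi, w[i]w[i{+}1]\cdots w[j{-}1])$. The base case $j = i$ holds because the token enters in state $\varphi = \aft(\varphi,\epsilon)$, and the inductive step uses the transition rule $\delta = \aft$ together with the compositional identity $\aft(\varphi, \nu w) = \aft(\aft(\varphi,\nu), w)$ from the definition of $\aft$.

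Next, I apply Lemma \ref{lem:aux} to the suffix $w_i$: we have $w_i \models \varphi$ iff $\aft(\varphi, (w_i)_{0k}) = \true$ for some $k$. By the previous observation, $(w_i)_{0k} = w[i]\cdots w[i{+}k{-}1]$ is precisely the word consumed by the $i$-th token in its first $k$ steps, so the right-hand side says exactly that the $i$-th token eventually reaches the unique accepting state $\true$. Combining with the semantics of $\F\G$, namely that $w \models \F\G\varphi$ iff there exists $N$ such that $w_i \models \varphi$ for every $i \geq N$, i.e., iff for almost every $i \in \Nset$ we have $w_i \models \varphi$, we obtain that $w \models \F\G\varphi$ iff for almost every $i$ the $i$-th token reaches $\true$. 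This is the definition of acceptance for $\K(\varphi)$, proving both inclusions at once.

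No step is really an obstacle: the only point worth checking carefully is the tracking lemma, and this is immediate from the compositional identity for $\aft$. The absorbing property $\delta(F,\Sigma) \subseteq F$ of the Mojmir automaton is automatic here because the unique accepting state is $\true$ and $\aft(\true,\nu) \equiv_p \true$, so once a token reaches $\true$ it stays there, which is what allows the ``almost every token eventually accepts'' condition to coincide with ``for almost every $i$, $w_i\models\varphi$'' without any extra bookkeeping.
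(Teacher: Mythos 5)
Your proposal is correct and follows essentially the same route as the paper: both rest on the tracking fact that after $j$ steps token $i$ sits at $\aft(\varphi,w_{i(j-1)})$, on Lemma~\ref{lem:aux} (whose easy direction is just Lemma~\ref{lem:fundaft}) to translate ``token $i$ reaches $\true$'' into $w_i\models\varphi$, and on the semantics of $\F\G$. The only cosmetic difference is that you prove the tracking claim by an explicit induction and invoke Lemma~\ref{lem:aux} as a single biconditional covering both inclusions, whereas the paper treats the tracking claim as immediate and splits the two inclusions between Lemma~\ref{lem:fundaft} and Lemma~\ref{lem:aux}.
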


\subsection{From \ka automata to DRAs}
\label{subsec:dekret}

Given a \ka automaton $\K=(Q,i,\delta, F)$ we construct an equivalent DRA. 
We illustrate all steps on the \ka automaton on the left of Figure \ref{fig:ka-dra}. It is convenient  
to use shorthands $q_a$ to $q_e$ for state names 
as shown in the figure. 

We label tokens with their dates of birth
(token $i$ is the token born at
 ``day'' $i$). Initially there is only one token, token $0$, placed on the initial state $i$. 
If, say, $\delta(i,\nu)=q$, then after $\K$ reads $\nu$ token $0$ moves to $q$, and token $1$ 
appears on $i$. 

A state of a Mojmir automaton is a {\em sink} if it is not the initial state
and all its outgoing transitions are self-loops. For instance, $q_3$ and $q_4$ 
are the sinks of the automaton on the left of Figure \ref{fig:ka-dra}. 
We define a {\em configuration} of $\K$ as a mapping $C \colon Q \setminus S 
\rightarrow 2^\mathbb{N}$, where $S$ is the set of sinks and $C(q)$ is the set 
of (dates of birth of the) tokens that are currently at state $q$. 
Notice that we do not keep track of tokens in sinks.

We extend the transition function to configurations: $\delta(C)$ is the configuration obtained 
by moving all tokens of $C$ according to $\delta$.
Let us represent a configuration
$C$ of our example by the vector $(C(q_1), C(q_2))$. 
For instance, we have $\delta((\{1,2\}, \{0\}),\bar{a}b\bar{c}))=(\{3\},\{0,1,2\})$.
We represent a run as an infinite sequence of configurations starting
at $(\{0\}, \emptyset)$.  The run
$(q_1)\tran{abc}(q_3,q_1)\tran{\bar{a}b\bar{c}}(q_3,q_2,q_1)\tran{\bar{a}b\bar{c}}(q_3,q_2,q_2,q_1) \cdots$ is represented by 
$(0,\emptyset) \tran{abc}(1,\emptyset)
\tran{\bar{a}b\bar{c}}(2,1)\tran{\bar{a}b\bar{c}}(3,\{1,2\})\cdots$
where for readability we identify the singleton $\{n\}$ and the number $n$. 
%

We now define a finite abstraction of configurations.
A {\em ranking} of $Q$ is a partial function 
$r \colon Q \rightarrow \{\mathbf{1}, \ldots, \mathbf{|Q|} \}$ that assigns 
to some states $q$ a {\em rank} and satisfies:
(1) the initial state is ranked (i.e., $r(i)$ is defined) and all sinks are unranked; 
(2) distinct ranked states have  distinct ranks; and (3) if 
some state has rank $\mathbf{j}$,
then some state has rank $\mathbf{k}$ for every $\mathbf{1} \leq \mathbf{k} \leq \mathbf{j}$. 
For $\mathbf{i}<\mathbf{j}$, we say that $\mathbf{i}$ is {\em older than} $\mathbf{j}$. 
The {\em abstraction} of a configuration $C$ 
is the ranking $\alpha[C]$ defined as follows for every non-sink $q$. 
If $C(q)=\emptyset$, then $q$ is unranked. If $C(q) \neq \emptyset$, then 
let $x_q = \min\{C(q)\}$ be the oldest token in $C(q)$.
We call $x_q$ the {\em senior token} of state $q$, and  
$\{x_q \in \mathbb{N} \mid q \in Q\}$ the set of {\em senior tokens}. We 
define $\alpha[C](q)$ as the {\em seniority rank} of $x_q$: if $x_q$ is the oldest 
senior token, then $\alpha[C](q)=1$; if it is the second oldest, then $\alpha[C](q)=2$, and so on. For instance, the senior tokens of $(2,\{0,1\},\emptyset)$ are $2$ and $0$, andso $\alpha(2, \{0,1\}, \emptyset)= (\mathbf{2},\mathbf{1},\bot)$ (recall that sinks
are unranked). Notice that there are only finitely many rankings, and so 
only finitely many abstract configurations. 

The transition function $\delta$ 
can be lifted to a transition function $\delta'$ on abstract configurations by defining 
$\delta'(\alpha[C],\nu)=\alpha[\delta(C,\nu)]$. 
It is easy to see that 
$\delta'(\alpha[C],\nu)$ can be computed directly from $\alpha[C]$ (even if 
$C$ is not known). We describe how, and at the same time illustrate by computing
$\delta'( (\mathbf{2},\mathbf{1}), \bar{a}b\bar{c})$ for our running example.
\begin{itemize} 
\item[(i)] Move the senior tokens according to $\delta$. (Tokens with ranks  
$\mathbf{1}$ and $\mathbf{2}$ move to $q_2$.) 
\item[(ii)] If a state holds more than one token, keep only the most senior token.  
(Only the token with rank $\mathbf{1}$ survives.) 
\item[(iii)] Recompute the seniority ranks of the remaining tokens. (In this case unnecessary; if,
for instance, the token of rank $\mathbf{3}$ survives and the token of rank $\mathbf{2}$ does not, then 
the token of rank $\mathbf{3}$ gets its rank upgraded to $\mathbf{2}$.) 
\item[(iv)] If there is no token on the initial state, add one with the next lowest seniority rank. 
(Add a token to $q_1$ of rank $\mathbf{2}$.)  
\end{itemize} 

\begin{example}
Figure \ref{fig:ka-dra} shows on the right the transition system generated by the function $\delta'$ 
starting at the abstract configuration $(\mathbf{1},\bot)$. 
\end{example}

It is useful to think of tokens as companies that can buy other companies: at step (2),  
the senior company buys all junior companies; they all get the rank of the senior  
company, and from this moment on travel around the automaton together with the senior company.  
So, at every moment in time, every token in a non-sink state has a rank 
(the rank of its senior token). The rank of a token can age 
as it moves along the run, for two different reasons: its senior token can be bought by another  
senior token of an older rank, or all tokens of an older rank reach a sink. However, ranks can never get younger.  

Further, observe that in any run, the tokens that never reach any sink eventually get the oldest ranks, 
i.e., ranks $\mathbf{1}$ to $\mathbf{i-1}$ for some $i \geq 1$. We call these tokens \emph{squatters}.  
Each squatter either enters the set of accepting states (and stays there by assumption on \ka automata) 
or never visits any accepting state. Now, consider a run in which almost every token succeeds. 
Squatters that never visit accepting states eventually stop buying other tokens, because 
otherwise infinitely many tokens would travel with them, and thus infinitely 
many tokens would never reach final states. So the run satisfies these conditions: 
\begin{itemize}
\item[(1)] Only finitely many tokens reach a non-accepting sink (``fail''). 
\item[(2)] There is a rank $\mathbf{i}$ such that
\begin{itemize}
\item[(2.1)] tokens of rank older than $\mathbf{i}$ buy other tokens in non-accepting states 
only finitely often, and
\item[(2.2)] infinitely many tokens of rank $\mathbf{i}$ reach an accepting state (``succeed'').
\end{itemize}
\end{itemize}
\noindent Conversely, we prove that if infinitely many tokens never succeed, then (1) or (2) does not hold. 
If infinitely many tokens fail, then (1) does not hold. 
If only finitely many tokens fail, but infinitely many tokens squat in non-accepting non-sinks, then (2) 
does not hold. Indeed, since the number of states is finite, infinitely many squatters get bought in non-accepting 
states and, since ranks can only improve, their ranks 
eventually stabilize. Let $\mathbf{j-1}$ be the youngest rank such that infinitely many tokens stabilize with that rank.
Then the squatters are exactly the tokens of ranks $\mathbf{1},\ldots , \mathbf{j-1}$,
and infinitely many tokens of rank $\mathbf{j}$ reach (accepting) sinks. 
But then (2.2) is violated for every $\mathbf{i}< \mathbf{j}$, 
and (2.1) is violated for every $\mathbf{i} \geq  \mathbf{j}$ as, by the pigeonhole principle,
there is a squatter 
(with rank older than $\mathbf{j}$) residing in non-accepting states and buying infintely many tokens.

\smallskip

So the runs in which almost every token succeeds are exactly those satisfying (1) and (2). 
We define a Rabin automaton having rankings as states, and accepting exactly these runs. We use
a Rabin condition with pairs of sets of transitions, instead of 
states.\footnote{It is straightforward to give an equivalent automaton with a condition on states, but transitions are better for us.} 
Let ${\it fail}$ be the set of transitions that move a token into a non-accepting sink.
Further, for every rank $\mathbf{j}$ let ${\it succeed}(\mathbf{j})$ 
be the set of transitions that move a token
of rank $\mathbf{j}$ into an accepting state, and ${\it buy}(\mathbf{j})$ the set of transitions
that move 
a token of rank older than $\mathbf{j}$ and another token
into the same non-accepting state, causing one of the two to buy the other.

\begin{definition}
\label{def:rabinization}
Let $\K=(Q,i,\delta, F)$ be a \ka automaton with a set $S$ of sinks. The deterministic Rabin automaton
$\R(\K) = (Q_\R, i_\R,\delta_\R, \bigvee_{i=1}^{|Q|} P_i)$ is defined as follows:
\begin{itemize}
\item $Q_\R$ is the set of rankings $r \colon Q \rightarrow \{1, \ldots, |Q|\}$;
\item $i_\R$ is the ranking defined only at the initial state $i$ (and so $i_\R(i)=\mathbf{1}$);
\item $\delta_\R(r,\nu) = \alpha[\delta(r,\nu)]$ for every ranking
$r$ and letter $\nu$;
\item $P_j= (\mathit{fail} \cup \mathit{buy}(\mathbf{j}), \mathit{succeed}(\mathbf{j}))$, where
\begin{align*}
\mathit{fail} & = \{(r, \nu, s)\in \delta_\R \mid \exists q\in Q: r(q) \in \mathbb N \, \wedge \, \delta(q,\nu)\in S\setminus F\} \\
\mathit{succeed}(\mathbf{j})&=\{(r,\nu,s)\in \delta_\R \mid \exists q\in Q: r(q)=\mathbf{j} \, \wedge \, \delta(q,\nu)\in F \} \\
\mathit{buy}(\mathbf{j})&=\{(r,\nu,s)\in \delta_\R \mid  
\begin{array}[t]{l}
\exists q,q'\in Q:r(q)<\mathbf{j} \, \wedge  \, r(q')\in\mathbb N \\
\wedge \, \big(\delta(q,\nu)=\delta(q',\nu)\notin F\vee \delta(q,\nu)=i\notin F\big)\} \end{array}
\end{align*}
\end{itemize}
We say that a word $w \in \Lang(\R(\K))$ is accepted {\em at rank $\mathbf{j}$} if $P_j$ is
the accepting pair in the run of $\R(\K)$ on $w$ with smallest index. The rank at which
$w$ is accepted is denoted by $\rank{w}$.
\end{definition}
By the discussion above, we have
\begin{theorem}
\label{thm:rabinization}
For every \ka automaton $\K$: $\Lang(\K) = \Lang(\R(\K))$. 
\end{theorem}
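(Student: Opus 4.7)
The plan is to align acceptance on both sides by tracking the concrete token-level run of $\K$ through the abstract run of $\R(\K)$, and then to formalize the squatter argument sketched in the paragraphs preceding Definition~\ref{def:rabinization}. I would first verify that the unique run $r_0 r_1 r_2 \cdots$ of $\R(\K)$ on a word $w$ is exactly the sequence of abstractions $\alpha[C_0], \alpha[C_1], \ldots$ of the configurations visited by $\K$ on $w$. This is immediate by induction on $n$ from the defining equation $\delta_\R(r,\nu) = \alpha[\delta(r,\nu)]$, together with the observation that steps (i)--(iv) coincide with $\alpha \circ \delta$ on any configuration whose senior-token profile matches $r$ (merged tokens travel together forever, so only seniors decide future rankings). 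Consequently every transition of $\R(\K)$ can be read directly off the movements of tokens in $\K$, and the membership tests defining $\mathit{fail}$, $\mathit{buy}(\mathbf{j})$, $\mathit{succeed}(\mathbf{j})$ translate verbatim into statements about tokens.

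For the inclusion $\Lang(\K) \subseteq \Lang(\R(\K))$, I would assume almost every token of the run of $\K$ on $w$ reaches $F$ and exhibit an index $\mathbf{j}$ making $P_j$ accepting. Only finitely many tokens can enter a non-accepting sink, so $\mathit{fail}$ is visited finitely often. Because ranks of tokens can only age and a fresh youngest-rank token is born at every step, there is a smallest rank $\mathbf{j}$ that is acquired by infinitely many tokens; these infinitely many rank-$\mathbf{j}$ tokens must eventually succeed, contributing infinitely many transitions to $\mathit{succeed}(\mathbf{j})$. Finally, any squatter of rank older than $\mathbf{j}$ that never visits $F$ can absorb only finitely many tokens, since every bought token travels with the squatter forever in the non-accepting region, contradicting the acceptance hypothesis on $\K$. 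Hence $\mathit{buy}(\mathbf{j})$ is visited only finitely often, and $P_j$ is accepting.

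For the converse, assume $P_j$ is accepting in $\R(\K)$ with minimal $\mathbf{j}$. Then from some step $N$ on no transition lies in $\mathit{fail} \cup \mathit{buy}(\mathbf{j})$, and transitions in $\mathit{succeed}(\mathbf{j})$ occur infinitely often. I would argue by case analysis on the fate of a token born after $N$: either it reaches an accepting state directly once promoted to rank $\mathbf{j}$ via the infinitely many $\mathit{succeed}(\mathbf{j})$ events; or it is merged with a rank-$<\mathbf{j}$ squatter that, by absence of $\mathit{buy}(\mathbf{j})$ transitions beyond $N$, must already lie in $F$, after which the merged token stays in $F$ by $\delta(F,\Sigma) \subseteq F$; or it falls into a non-accepting sink, which by the $\mathit{fail}$ bound happens only finitely often in total. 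In all persistent cases the token ends up in $F$, so almost every token succeeds and $w \in \Lang(\K)$.

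The main obstacle is the squatter analysis in the forward direction: making precise that a non-succeeding squatter of rank older than $\mathbf{j}$ cannot absorb infinitely many tokens requires exploiting both the invariant that merged tokens share all future states and the monotonicity of ranks, so that the acceptance failure of infinitely many absorbed tokens is genuinely witnessed by an infinite $\mathit{buy}(\mathbf{j})$ set and not merely by some $\mathit{buy}(\mathbf{k})$ with $\mathbf{k} > \mathbf{j}$. I would handle this by singling out, for each rank $\mathbf{k} \le \mathbf{j}$, the at-most-one token that eventually stabilizes at rank $\mathbf{k}$, and transporting acceptance failures from absorbed tokens to the stabilized squatter via the shared-trajectory invariant.
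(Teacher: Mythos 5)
Your overall strategy---identifying the run of $\R(\K)$ with the sequence of abstractions of the configuration run of $\K$, and then formalizing the squatter/rank analysis---is exactly the route the paper takes: the paper's entire proof of this theorem is the informal discussion of conditions (1) and (2) preceding Definition~\ref{def:rabinization}, and your forward direction is a faithful (indeed somewhat more careful) rendering of it. In particular your choice of $\mathbf{j}$ as the smallest rank acquired by infinitely many tokens, and the counting argument that infinitely many succeeding tokens yield infinitely many $\mathit{succeed}(\mathbf{j})$ \emph{steps}, are sound refinements of the paper's sketch.

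The genuine gap is in your converse direction: the case analysis on the fate of a token born after $N$ is not exhaustive. A token can be absorbed by a squatter group that never visits $F$ and whose rank stabilizes at some $\mathbf{r}$ strictly \emph{younger} than $\mathbf{j}$ (i.e.\ $\mathbf{r}>\mathbf{j}$). Such a merge lies in $\mathit{buy}(\mathbf{k})$ only for $\mathbf{k}>\mathbf{r}$, so acceptance of $P_j$ does not forbid it; the token is never promoted to rank $\mathbf{j}$, never merges with a rank-$<\mathbf{j}$ token, and never fails, so it escapes all three of your cases. To close this you must show that such a group would eventually freeze the groups at ranks $\mathbf{1},\dots,\mathbf{r}-1$, so that the rank-$\mathbf{j}$ slot is never again vacated and $\mathit{succeed}(\mathbf{j})$ fires only finitely often---contradiction. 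That argument requires every $\mathit{succeed}(\mathbf{j})$ event to be witnessed by the rank-$\mathbf{j}$ group \emph{entering} $F$ from a non-accepting state, so that infinitely many distinct groups pass through rank $\mathbf{j}$; this is how the paper's informal condition (2.2), ``infinitely many \emph{tokens} of rank $\mathbf{i}$ reach an accepting state,'' must be read. Note that the literal set $\mathit{succeed}(\mathbf{j})$ of Definition~\ref{def:rabinization} also contains transitions in which the rank-$\mathbf{j}$ group merely moves from one non-sink accepting state to another, and with that reading the freezing argument (and the equivalence itself) breaks down for \ka automata possessing non-sink accepting states, which do arise for the slaves of Section~\ref{sec:gen-FG}. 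So in completing this case you should either restrict to automata whose accepting states are sinks, or add the requirement $q\notin F$ to the definition of $\mathit{succeed}(\mathbf{j})$; with that in place your trichotomy can be repaired by first showing that every non-succeeding token born after $N$ stabilizes at a rank $\le\mathbf{j}$, and then that the at most $\mathbf{j}$ resulting groups absorb only finitely many tokens.
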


\begin{example}
Let us determine the accepting pairs of the DRA on the right of Figure \ref{fig:ka-dra}. 
We have ${\it fail} = \{ t_2,t_7,t_8\}, {\it buy}(\mathbf{1}) = \emptyset, {\it succeed}(\mathbf{1}) = \{t_1,t_6\}$, and ${\it buy}(\mathbf{2})=\{t_5,t_8\}, {\it succeed}(\mathbf{2}) = \{t_4,t_6,t_7\}$.

It is easy to see that the runs accepted by the pair $P_1$ are those that 
take $t_2,t_7,t_8$ only finitely often, and visit $(\mathbf{1}, \bot)$ 
infinitely often. They are accepted at rank $\mathbf{1}$.
The runs accepted at rank $\mathbf{2}$ are those accepted by $P_2$ but not by $P_1$.
They take $t_1, t_2, t_5, t_6, t_7, t_8$ finitely often, and so they are exactly 
the runs with a $t_4^\omega$ suffix. 
\end{example}

\subsection{The Automaton $\R(\varphi)$}

Given a $\G$-free formula $\varphi$, we define $\R(\varphi) = \R(\K(\varphi))$. 
By Theorem \ref{thm:kretcorrect} and Theorem \ref{thm:rabinization}, we 
have $\Lang(\R(\varphi)) = \Lang(\F\G\varphi)$. 


If $w$ is accepted by $\R(\varphi)$ at rank $\rank{w}$, then we not only know 
that $w$ satisfies $\F\G\varphi$. In order to explain exactly what 
else we know, we need the following definition.

\begin{definition}
Let $\delta_{\R}$ be the transition function of the DRA $\R(\varphi)$ and 
let $w \in \Lang(\varphi)$ be a word. For every $j \in \mathbb{N}$, we denote 
by $\rankf{w_{0j}}$ the conjunction of the formulae of rank younger than or equal to $\rank{w}$ 
at the state $\delta_{\R}(i_\R, w_{0j})$.
\end{definition}

Intuitively, we also know that $w_j$ satisfies $\rankf{w_{0j}}$
for almost every index $j\in \mathbb{N}$, a fact we will use for the 
accepting condition of the Rabin automaton for general formulae in
Section \ref{sec:gen}. Before proving this, we give an example.

\begin{example}
Consider the Rabin automaton on the right of Figure \ref{fig:ka-dra}. Let $w = (\{b\}\{c\})^\omega$. Its corresponding run is $(t_3t_6)^\omega$, 
which is accepted at rank $\mathbf{1}$. For every even value $j$, 
$\rankf{w_{0j}}$ is the conjunction of the formulae of rank $\mathbf{1}$ and $\mathbf{2}$
at the state $(\mathbf{2},\mathbf{1})$. So we get $\rankf{w_{0j}} = (a \vee (b \; \U \; c)) \wedge
(b \; \U \; c) \equiv_p (b \; \U \; c)$, and therefore we know that infinitely many suffixes of $w$ satisfy
$(b \; \U \; c)$. In other words, the automaton tells us not only that 
$w \models \F\G(a \vee (b \; \U \; c))$, but also that $w \models \F\G(b \; \U \; c)$.
\end{example}

We now show this formally. 
If $w \models \F\G\varphi$, there is a smallest index 
$\ind{w}{\varphi}$ at which $\varphi$ ``starts to hold''. 
For every index $j \geq \ind{w}{\varphi}$, we have
$w_j\models\bigwedge_{k = \ind{w}{\varphi}}^j \aft(\varphi, w_{kj}) \ . $ 
Intuitively, this formula is the conjunction
of the formulae ``tracked'' by the tokens of $\K(\varphi)$ born on days 
$\ind{w}{\varphi}, \ind{w}{\varphi}+1, \ldots, j$. 
These are the ``true'' tokens of $\K(\varphi)$, that is, those that 
eventually reach an accepting state. We get: 

\begin{lemma}
\label{lem:interpret}
Let $\varphi$ be a $\G$-free formula and let $w \in \Lang(\R(\varphi))$.
Then 
\begin{itemize}
\item[(1)] $\rankf{w_{0j}} \equiv \bigwedge_{k = \ind{w}{\varphi}}^j \aft(\varphi, w_{kj})$ for almost every $j \in \mathbb{N}$; and 
\item[(2)] $w_j \models \rankf{w_{0j}}$ for almost every $j \in \mathbb{N}$.
\end{itemize}
\end{lemma}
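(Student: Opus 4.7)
The plan is to derive (2) as a quick corollary of (1) together with Lemma \ref{lem:fundaft}, then focus all effort on (1). For (2), note that for every $k$ with $T \leq k \leq j$, where $T=\ind{w}{\varphi}$, we have $w_k \models \varphi$; since $w_k = w_{kj} \cdot w_j$, Lemma \ref{lem:fundaft} gives $w_j \models \aft(\varphi, w_{kj})$. Conjoining over $k$ and applying (1) yields $w_j \models \rankf{w_{0j}}$.

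For (1), I partition the tokens of $\K(\varphi)$ into three classes by eventual behaviour: \emph{true} (reach the accepting sink $\true$), \emph{failing} (reach a non-accepting sink) and \emph{squatter} (never reach any sink). By Lemma \ref{lem:aux}, $t_k$ is true iff $w_k \models \varphi$; hence every $t_k$ with $k \geq T$ is true, and only finitely many tokens (those born before $T$) can fail or squat. The crucial monotonicity is that two tokens sharing a state at some step share all states thereafter, so they share a destiny. This forces the three classes to be ``state-disjoint'': a true token cannot cohabit with a squatter or a failing token (they would inherit its destiny of reaching $\true$), and similarly for squatter--failing.

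Using this, I show the configuration stabilises: for almost every $j$ all failing tokens have failed, all early true tokens (born before $T$) have already reached $\true$, and the squatter companies have undergone all their squatter--squatter merges, leaving exactly $s$ of them at $s$ distinct states. So the automaton holds the $s$ squatter companies together with the late true tokens born in $[T,j]$ that have not yet reached $\true$, at states disjoint from the squatter states. By seniority, the squatter states take the oldest ranks $\mathbf{1},\dots,\mathbf{s}$ and the true-token states take ranks $\mathbf{s+1},\dots,\mathbf{s+t}$.

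I then pin down $\rank{w}=\mathbf{s+1}$. For $\mathbf{j}\leq \mathbf{s}$, rank $\mathbf{j}$ is eventually a squatter state, so $\mathit{succeed}(\mathbf{j})$ is finite and $P_{\mathbf{j}}$ fails. For $\mathbf{j}=\mathbf{s+1}$: $\mathit{fail}$ is finite (only the finitely many failing tokens ever fail); $\mathit{buy}(\mathbf{s+1})$ is finite because after stabilisation no squatter merges with anyone---in particular, a squatter can never land on the initial state, for then the just-born late true token would merge with it, contradicting state-disjointness; and $\mathit{succeed}(\mathbf{s+1})$ fires each time the oldest surviving late true token enters $\true$, yielding infinitely many hits since new late true tokens are born forever and each reaches $\true$ in finite time. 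Hence the states of rank $\geq \rank{w}$ are precisely the set $TTS_j$ of true-token states at time $j$, so $\rankf{w_{0j}}=\bigwedge_{q\in TTS_j} q$. On the other side, every conjunct of $\bigwedge_{k=T}^{j}\aft(\varphi,w_{kj})$ is either $\true$ (if $t_k$ has already succeeded) or the state of a late true token, which lies in $TTS_j$; conversely every element of $TTS_j$ hosts at least one such $t_k$. So modulo propositional equivalence the two conjunctions coincide, establishing (1). The main obstacle is the identification $\rank{w}=\mathbf{s+1}$, specifically the finiteness of $\mathit{buy}(\mathbf{s+1})$ (which hinges on the ``no squatter at the initial state'' argument) and the infiniteness of $\mathit{succeed}(\mathbf{s+1})$ via the rotating oldest surviving late true token.
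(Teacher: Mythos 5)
Your proof is correct and follows essentially the same route as the paper's: part (2) is obtained from part (1) via Lemma \ref{lem:fundaft}, and part (1) rests on the observation that for almost every $j$ the states of rank younger than or equal to $\rank{w}$ are exactly those occupied by the ``true'' tokens born after $\ind{w}{\varphi}$, so both conjunctions reduce to the conjunction of the formulae tracked by those tokens. The paper's version is far terser, leaning on the acceptance-condition analysis preceding Theorem \ref{thm:rabinization}, whereas you re-derive the identification $\rank{w}=\mathbf{s+1}$ (squatters occupying ranks $\mathbf{1},\dots,\mathbf{s}$) explicitly; the mathematical content is the same.
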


\section{DRAs for arbitrary $\F\G$-formulae}
\label{sec:gen-FG}

We construct a DRA for an arbitrary formula $\F\G$-formula $\F\G\varphi$. It suffices to construct
a \ka automaton, and then apply the construction of Section \ref{subsec:dekret}.
We show that the \ka automaton can be defined compositionally, as a parallel composition
of \ka automata, one for each $\G$-subformula. 

\begin{definition}
Given a formula $\varphi$, we denote by $\gsf(\varphi)$ the set of $\G$-subformulae
of $\varphi$, i.e., the subformulae of $\varphi$ of the form $\G\psi$.
\end{definition}

\noindent More precisely, for every $\setG \subseteq \gsf(\F\G\varphi)$
and every $\G\psi \in \setG$, we construct a \ka automaton $\K(\psi, \setG)$. 
Automata $\K(\psi, \setG)$ and $\K(\psi, \setG')$ for two different sets 
$\setG, \setG'$ have the same transition system, i.e., they differ only on the accepting condition. 
The automaton $\K(\psi, \setG)$ checks that $\F\G\psi$ holds, under the assumption
that $\F\G\psi'$ holds for all the subformulae $\G\psi'$ of $\psi$ that belong to $\setG$. 
Circularity is avoided, because automata for $\psi$ only rely on assumptions about proper 
subformulae of $\psi$. Loosely speaking, the Rabin automaton for $\F\G\varphi$
is the parallel composition (or product) of the Rabin automata for the $\K(\psi, \setG)$ (which are independent of
$\setG$), with an acceptance condition obtained from the acceptance
conditions of the $\K(\psi, \setG)$.

We only need to define the automaton $\K(\varphi, \setG)$, because the 
automata $\K(\psi, \setG)$ are defined inductively in exactly the same 
way. Intuitively, the automaton for $\K(\varphi, \setG)$ does not 
``track'' $\G$-subformulae of $\varphi$, it delegates that task to the automata
for its subformulae. This is formalized with the help of the following definition.

\begin{definition}
\label{def:aftg}
Let $\varphi$ be a formula and $\nu \in 2^{Ap}$. The formula
$\aftg(\varphi,\nu)$ is inductively defined as $\aft(\varphi,\nu)$, with only this difference: 
$$\hspace{2.0cm}\aftg(\G\varphi,\nu) =  \G\varphi \qquad \mbox{(instead of $\aft(\G\varphi,\nu) = \aft(\varphi,\nu)\wedge \G\varphi$).}$$
\noindent We define $\Reachg(\varphi)~=~\{ \aftg(\varphi,w) \mid w \in (2^{Ap})^* \}$
(up to $\equiv_p$).
\end{definition}

\begin{example}
\label{ex:aftg}
Let $\varphi = \psi  \U \neg a$, where $\psi =\G(a \wedge \X\neg a)$. We have 
$$\begin{array}{lclcl}
\aftg(\varphi, \{a\}) & = & \aftg(\psi,\{a\}) \wedge \varphi & \equiv_p & \psi \wedge \varphi \\
\aft(\varphi, \{a\}) & = & \aft(\psi,\{a\}) \wedge \varphi & \equiv_p & \neg a \wedge \psi \wedge \varphi  
\end{array}$$
\end{example}

\begin{definition}
\label{def:slave2}
Let $\varphi$ be a formula and let $\setG \subseteq \gsf(\varphi)$. 
The \ka automaton of $\varphi$ with respect to $\setG$ is
the quadruple $\K(\varphi,\setG)=$ $(\Reachg(\varphi), \varphi, \aftg, F_\setG)$, where  
$F_\setG$ contains the formulae $\varphi' \in \Reachg(\varphi)$ propositionally implied by $\setG$, i.e.\ the formulae 
satisfying $\bigwedge_{\G\psi \in \setG} \G\psi \models_p \varphi'$.
\end{definition}

\noindent Observe that only the set of accepting states of $\K(\varphi,\setG)$ depends on $\setG$. 
The following lemma shows that states reachable from final states are also final.

\begin{lemma}
Let $\varphi$ be a formula and let $\setG \subseteq \gsf(\varphi)$. 
For every $\varphi' \in \Reachg(\varphi)$, if $\bigwedge_{\G\psi \in \setG} \G\psi \models_p \varphi'$
then $\bigwedge_{\G\psi \in \setG} \G\psi \models_p \aftg(\varphi',\nu)$ for every $\nu \in 2^{Ap}$.
\end{lemma}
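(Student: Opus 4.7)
The plan is to view $\aftg(\cdot,\nu)$ as a propositional substitution that fixes every $\G$-subformula, and then invoke the standard fact that propositional substitution preserves propositional implication.

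First I would take the propositional view of LTL formulae: regard propositional literals ($a$, $\neg a$, $\true$, $\false$) together with all maximal temporal subformulae (of the form $\X\chi$, $\F\chi$, $\G\chi$, or $\chi_1\U\chi_2$) as \emph{atoms}, so that every LTL formula is a Boolean combination of atoms. The clauses for $\aftg$ on $\wedge$ and $\vee$ show, by an easy induction on Boolean structure, that $\aftg$ commutes with these connectives. Hence if $\sigma_\nu$ is the propositional substitution that sends each atom $\alpha$ to $\aftg(\alpha,\nu)$ and is extended homomorphically over $\wedge,\vee$, then $\aftg(\chi,\nu) = \sigma_\nu(\chi)$ for every formula $\chi$.

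The key feature of $\sigma_\nu$ is that it fixes every formula of the form $\G\chi$, since $\aftg(\G\chi,\nu) = \G\chi$ directly by definition. In particular, $\sigma_\nu\bigl(\bigwedge_{\G\psi\in\setG}\G\psi\bigr) = \bigwedge_{\G\psi\in\setG}\G\psi$. Applying the standard substitution lemma of propositional logic (if $\alpha \models_p \beta$ then $\sigma(\alpha) \models_p \sigma(\beta)$ for any substitution $\sigma$) to the hypothesis $\bigwedge_{\G\psi\in\setG}\G\psi \models_p \varphi'$ with $\sigma = \sigma_\nu$ then yields $\bigwedge_{\G\psi\in\setG}\G\psi \models_p \aftg(\varphi',\nu)$, as required. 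The only delicate point is the bookkeeping: one must be sure that each $\G\psi$ is treated as the same atom on both sides of the implication and that substitution on these atoms is well defined; this is immediate once atoms are chosen canonically as syntactic subformulae modulo $\equiv_p$, which is the convention already adopted in the paper.
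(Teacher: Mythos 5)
Your proposal is correct and is essentially a careful expansion of the paper's own one-line argument, which likewise rests on the two facts you isolate: that $\aftg(\cdot,\nu)$ acts as a propositional substitution (commuting with $\wedge$ and $\vee$ and mapping the temporal atoms), and that it fixes every $\G\psi$, so the substitution instance of the hypothesis $\bigwedge_{\G\psi\in\setG}\G\psi \models_p \varphi'$ is exactly the desired conclusion. The explicit appeal to the substitution lemma for $\models_p$ is a welcome clarification but not a different route.
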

\begin{proof}
Follows easily from the definition of $\models_p$ and $\aftg(\G\psi) = \G\psi$.
\end{proof}

\begin{example}
Let $\varphi = (\G\psi) \U \neg a$, where $\psi =a \wedge \X\neg a$.
We have $\gsf(\varphi)=\{\G\psi\}$, and so two automata $\K(\varphi, \emptyset)$ and $\K(\varphi, \{\G\psi\})$,
whose common transition system is shown on the left of Figure \ref{fig:kanested}.
We have one single automaton $\K(\psi, \emptyset)$, shown on the right of the figure.
A formula $\varphi'$ is an accepting state of $\K(\psi, \emptyset)$ if $ \true \models_p \varphi'$;
and so the only accepting state of the automaton on the right is $\true$. 
On the other hand, $\K(\varphi, \{\G\psi\})$ has both $\G\psi$ and $\true$
as accepting states, but the only accepting state of
$\K(\varphi, \emptyset)$ is $\true$. 
\end{example}

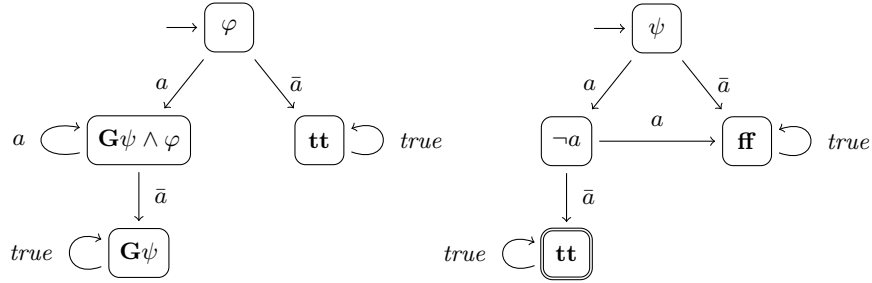
\begin{figure}
\myspace\myspace
\begin{center}
\begin{minipage}{5.0cm}
\begin{tikzpicture}[x=3cm,y=1.5cm,font=\footnotesize,initial text=,outer sep=1mm]
\tikzstyle{acc}=[double]
\node[state,initial] (a) at (0,0) {$\varphi$};
\node[state] (b) at (-0.4,-1) {$\G\psi \wedge \varphi$};
\node[state] (c) at (0.4,-1) {$\true$};
\node[state] (d) at (-0.4,-2) {$\G\psi$};

\path[->] (a) edge node[left]{$a$} (b) 
              edge node[right]{$\bar{a}$} (c)
          (b) edge node[right] {$\bar{a}$} (d)          
	  (b) edge[loop left, max distance=7mm,in=168,out=192,looseness=15] node[left]{$a$} (b)
          (c) edge[loop left, max distance=6mm,in=20,out=340,looseness=15] node[right]{${\it true}$} (c)
          (d) edge[loop left, max distance=6mm,in=160,out=200,looseness=15] node[left]{${\it true}$} (d)
;
;
\end{tikzpicture}
\end{minipage}
\qquad
\begin{minipage}{6.0cm}
\begin{tikzpicture}[x=3cm,y=1.5cm,font=\footnotesize,initial text=,outer sep=1mm]
\tikzstyle{acc}=[double]
\node[state,initial] (1) at (0,0) {$\psi$};
\node[state] (2) at (-0.4,-1) {$\neg a$};
\node[state] (3) at (0.4,-1) {$\false$};
\node[state,acc] (4) at (-0.4,-2) {$\true$};

\path[->] (1) edge node[left]{$a$} (2)
              edge node[right]{$\bar{a}$} (3)
          (2) edge node[above] {$a$} (3)
              edge node[right] {$\bar{a}$} (4)
          (3) edge[loop left, max distance=6mm,in=20,out=340,looseness=15] node[right]{${\it true}$} (3)
          (4) edge[loop left, max distance=6mm,in=160,out=200,looseness=15] node[left]{${\it true}$} (4)
;
;
\end{tikzpicture}
\end{minipage}
\end{center}
\myspace
\caption{\ka automata for $\varphi = (\G\psi) \; \U \neg a$, where  $\psi =a \wedge \X\neg a$.}
\label{fig:kanested}
\myspace\myspace
\end{figure}

\begin{theorem}
\label{thm:slaves}
Let $\varphi$ be a formula and let $w$ be a word. Then $w \models \F\G\varphi$ if{}f 
there is $\setG \subseteq \gsf(\varphi)$ such that (1) $w \in \Lang(\K(\varphi,\setG))$, and
(2) $w \models \F\G\psi$ for every $\G\psi \in \setG$.
\end{theorem}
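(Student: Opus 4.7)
The plan is to reduce both implications to Lemma \ref{lem:aux} (the $\G$-free case) via a substitution that turns $\varphi$ into a $\G$-free formula. For the given word $w$, first I would isolate $\setG^\star = \{\G\psi \in \gsf(\varphi) : w \models \F\G\psi\}$. Since each $\G\psi$ either eventually always holds or never holds on $w$, there is an index $K$ such that for every $k \geq K$ and every $\G\psi \in \gsf(\varphi)$, $w_k \models \G\psi$ iff $\G\psi \in \setG^\star$. I then define the $\G$-free formula $\hat\varphi$ by replacing each occurrence in $\varphi$ of any $\G\psi \in \gsf(\varphi)$ with $\true$ if $\G\psi \in \setG^\star$ and with $\false$ otherwise; because this substitution matches the actual truth values, $w_k \models \varphi$ iff $w_k \models \hat\varphi$ for every $k \geq K$.

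Two technical ingredients link this to the Mojmir automaton. The first is a substitution lemma $\aft(\hat\varphi, u) \equiv_p \widehat{\aftg(\varphi,u)}$, where the hat on the right denotes the same substitution applied to the $\G$-subformulas of $\aftg(\varphi,u)$; this is well defined because $\aftg$ never introduces fresh $\G$-subformulas, so all $\G$-subformulas of $\aftg(\varphi,u)$ lie in $\gsf(\varphi)$. I would prove it by structural induction on $\varphi$ together with induction on $|u|$; the only nontrivial case is $\varphi = \G\psi$, where both sides collapse to the same constant. The second ingredient is the observation that, because LTL is in negation normal form and the rules defining $\aftg$ combine subformulas using only $\wedge$ and $\vee$, every $\G$-subformula appears positively in $\aftg(\varphi,u)$. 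As a consequence, setting the $\G\psi \in \setG$ to $\true$ and the remaining $\G$-subformulas to $\false$ is the worst case for positivity, so for any $\setG \subseteq \gsf(\varphi)$ the accepting condition $\bigwedge_{\G\psi \in \setG}\G\psi \models_p \aftg(\varphi,u)$ of $\K(\varphi,\setG)$ is equivalent to the assertion that $\aftg(\varphi,u)$ propositionally reduces to $\true$ under this substitution.

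For the $(\Rightarrow)$ direction I would take $\setG = \setG^\star$, which trivially satisfies~(2). For every $k \geq K$, applying Lemma \ref{lem:aux} to the $\G$-free $\hat\varphi$ and the suffix $w_k$ yields some $j \geq k$ with $\aft(\hat\varphi,w_{kj}) \equiv_p \true$; the substitution lemma and monotonicity then turn this into the accepting condition of $\K(\varphi,\setG^\star)$, so almost every token reaches an accepting state. For the $(\Leftarrow)$ direction, condition~(2) forces $\setG \subseteq \setG^\star$, hence the $\models_p$ accepting condition applied to the concrete assignment ``$\setG^\star$ true, $\gsf(\varphi)\setminus\setG^\star$ false'' (which does satisfy $\setG$ true) gives that $\aftg(\varphi,w_{kj})$ reduces to $\true$ under the $\setG^\star$-substitution; the substitution lemma translates this back into $\aft(\hat\varphi,w_{kj}) \equiv_p \true$, and Lemma \ref{lem:aux} together with the equivalence $w_k \models \varphi \Leftrightarrow w_k \models \hat\varphi$ yields $w_k \models \varphi$ for almost every $k$, that is, $w \models \F\G\varphi$.

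The main obstacle I expect is managing the substitution cleanly in the presence of nested $\G$-subformulas, so that both the substitution lemma and the positivity argument go through unchanged; the cleanest remedy is to treat every $\G\psi \in \gsf(\varphi)$ uniformly as a propositional atom, which is exactly the stance already taken by the paper's definition of $\models_p$.
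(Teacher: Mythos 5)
Your proof is correct, but it takes a genuinely different route from the paper's. The paper proves a dedicated lemma (Lemma~\ref{lem:auxg}) by structural induction on $\varphi$, showing directly that $w \models \varphi$ implies $\bigwedge_{\G\psi\in\setG}\G\psi \models_p \aftg(\varphi,w_{0i})$ for almost every $i$, and handles the converse by a short semantic argument (pick $k$ beyond which all $\G\psi\in\setG$ hold, then $w_j \models \setG$ and $\setG \models_p \aftg(\varphi,w_{ij})$ give $w_i \models \varphi$). You instead reduce everything to the $\G$-free case of Lemma~\ref{lem:aux} by freezing the eventual truth values of the $\G$-subformulae into a substitution $\hat\varphi$, at the price of two auxiliary facts: the commutation $\aft(\hat\varphi,u) \equiv_p \widehat{\aftg(\varphi,u)}$ and the monotonicity characterization of the accepting condition. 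Both facts hold and your proof sketches for them are the right ones; note only that $\bigwedge_{\G\psi\in\setG}\G\psi \models_p \chi$ is equivalent to $\widehat{\chi} \equiv_p \true$ (a propositional tautology in the \emph{remaining} atoms), which by positivity amounts to also evaluating the non-$\G$ atoms to $\false$ --- your phrasing slightly elides this, and you must also apply the substitution consistently inside nested atoms such as $(\G a) \U b$, but treating every maximal non-Boolean subformula as a propositional atom, as you propose, makes this go through. What your route buys is a clean conceptual picture (the slave for $\varphi$ relative to $\setG$ is the $\G$-free slave for $\hat\varphi$ in disguise once the oracle information is fixed) and symmetric reuse of Lemma~\ref{lem:aux} in both directions; what the paper's route buys is less bookkeeping, since its single induction in Lemma~\ref{lem:auxg} replaces your substitution lemma, your positivity argument, and your equivalence $w_k \models \varphi \Leftrightarrow w_k \models \hat\varphi$ for $k$ beyond the stabilization index $K$.
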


Using induction on the structure of $\G$-subformulae we obtain:

\begin{theorem}
\label{thm:decomp}
Let $\varphi$ be a formula and let $w$ be a word. Then 
$w \models \F\G\varphi$ if{}f there is $\setG \subseteq \gsf(\F\G\varphi)$ such that
$w \in \Lang(\K(\psi,\setG))$ for every $\G\psi \in \setG$.
\end{theorem}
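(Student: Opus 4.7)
The plan is to prove the theorem by induction on the number of $\G$-subformulae of $\varphi$, using Theorem~\ref{thm:slaves} as a ``one-step unfolding'' that peels off a single $\F\G$. The base case, $\varphi$ being $\G$-free, reduces directly to Theorem~\ref{thm:kretcorrect}, since for such $\varphi$ we have $\aftg = \aft$ and $\Reachg(\varphi) = \Reach(\varphi)$, and the only formula in $\Reachg(\varphi)$ propositionally implied by any $\setG \subseteq \{\G\varphi\}$ is $\true$, so $\K(\varphi,\setG)$ coincides with $\K(\varphi)$.

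Before tackling the induction step, I would establish two structural properties of the family $\K(\chi,\setG)$ on which the whole argument hinges. First, a \emph{monotonicity} property: if $\setG \subseteq \setG'$ then $F_\setG \subseteq F_{\setG'}$ and hence $\Lang(\K(\chi,\setG)) \subseteq \Lang(\K(\chi,\setG'))$, since a larger accepting set can only cause more tokens to succeed. Second, an \emph{irrelevance} property: every formula in $\Reachg(\chi)$ contains only $\G$-subformulae from $\gsf(\chi)$, because $\aftg$ preserves $\G$-subformulae verbatim and never creates new ones; and since $\models_p$ treats $\G$-formulae as atoms, conjuncts $\G\xi \in \setG \setminus \gsf(\chi)$ are useless, so $\K(\chi,\setG) = \K(\chi, \setG \cap \gsf(\chi))$.

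For the $(\Rightarrow)$ direction of the step, I would apply Theorem~\ref{thm:slaves} to $\F\G\varphi$ to obtain a set $\setG_0 \subseteq \gsf(\varphi)$ with $w \in \Lang(\K(\varphi, \setG_0))$ and $w \models \F\G\psi$ for every $\G\psi \in \setG_0$. Each such $\psi$ is strictly smaller in $\G$-count, so the induction hypothesis produces a witness $\setG_\psi \subseteq \gsf(\F\G\psi)$ for $\F\G\psi$. I would then assemble $\setG = \{\G\varphi\} \cup \setG_0 \cup \bigcup_{\G\psi \in \setG_0} \setG_\psi$ and invoke monotonicity to conclude $w \in \Lang(\K(\chi,\setG))$ for every $\G\chi \in \setG$.

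For the $(\Leftarrow)$ direction, given $\setG$ with $\G\varphi \in \setG$ (the only case in which the right-hand side genuinely constrains $\varphi$ itself), I would set $\setG' = \setG \cap \gsf(\varphi)$; by irrelevance $w \in \Lang(\K(\varphi,\setG)) = \Lang(\K(\varphi,\setG'))$. To invoke Theorem~\ref{thm:slaves} I still need $w \models \F\G\psi$ for every $\G\psi \in \setG'$, which I would obtain from the induction hypothesis applied to $\F\G\psi$ using the witness $\setG_\psi := \setG \cap \gsf(\F\G\psi)$. The main obstacle—indeed the only real one—is the bookkeeping step of verifying that this $\setG_\psi$ meets the inductive premise, namely that $w \in \Lang(\K(\chi,\setG_\psi))$ for every $\G\chi \in \setG_\psi$. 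This follows from the inclusion chain $\gsf(\chi) \subseteq \gsf(\psi) \subseteq \gsf(\varphi)$, which yields $\setG \cap \gsf(\chi) = \setG_\psi \cap \gsf(\chi)$, so by irrelevance $\K(\chi,\setG) = \K(\chi,\setG_\psi)$ and the assumption $w \in \Lang(\K(\chi,\setG))$ transfers to $\setG_\psi$.
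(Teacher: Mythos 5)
Your proof is correct and follows essentially the same route as the paper's: an induction on the $\G$-subformula structure with Theorem~\ref{thm:slaves} as the one-step unfolding and Theorem~\ref{thm:kretcorrect} at the base. The only difference is that you make explicit the monotonicity and irrelevance properties of the accepting sets $F_\setG$, and the implicit requirement $\G\varphi \in \setG$ (without which the right-hand side is vacuously true for $\setG = \emptyset$), all of which the paper's terser argument (``by repeated application of Theorem~\ref{thm:slaves}'') leaves unstated.
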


\subsection{The Product Automaton}\label{ssec:product}

Theorem \ref{thm:decomp} allows us to construct a Rabin automaton for an arbitrary formula of the form $\F\G\varphi$.
For every $\G\psi \in \gsf(\F\G\varphi)$ and every $\setG \subseteq \gsf(\F\G\varphi)$
let $\R(\psi,\setG) =(Q_\psi, i_\psi, \delta_\psi, {\it Acc}_\psi^\setG)$
be the Rabin automaton obtained by applying Definition \ref{def:rabinization} 
to the \ka automaton $\K(\psi,\setG)$. Since $Q_\psi, i_\psi, \delta_\psi$ do not depend on $\setG$,
we define the product automaton $\P(\varphi)$ as

$${\cal P}(\varphi)= \displaystyle \left(
\prod_{\G\psi \in \gsf(\varphi)} Q_\psi,\;
\prod_{\G\psi \in \gsf(\varphi)} \{i_\psi\}, \; 
\prod_{\G\psi \in \gsf(\varphi)} \delta_\psi,\; \; 
\bigvee_{\setG\subseteq \gsf{\varphi}} \bigwedge_{\G\psi \in \gsf(\varphi)} {\it Acc}_\psi^\setG
\right)
$$ 

Since each of the ${\it Acc}_\psi^\setG$ is a Rabin condition, we obtain a generalized Rabin condition.
This automaton can then be transformed into an equivalent Rabin automaton \cite{cav12}. However, as shown in
\cite{cav13}, for many applications it is better to keep it in this form. By Theorem \ref{thm:decomp}
we immediately get: 

\begin{theorem}
\label{thm:decomp2}
Let $\varphi$ be a formula and let $w$ be a word. Then 
$w \models \F\G\varphi$ if{}f there is $\setG \subseteq \gsf(\F\G\varphi)$ such that
$w \in L(\P(\varphi))$.
\end{theorem}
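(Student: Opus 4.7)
The plan is to unfold the generalized Rabin acceptance condition of $\P(\varphi)$ component-wise, apply Theorem~\ref{thm:rabinization} to each slave to move from the Rabin automaton $\R(\psi,\setG)$ back to the \ka{} automaton $\K(\psi,\setG)$, and then invoke Theorem~\ref{thm:decomp} to translate the resulting statement into $w \models \F\G\varphi$.

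First I would rewrite $w \in \Lang(\P(\varphi))$. By the acceptance condition $\bigvee_{\setG}\bigwedge_{\G\psi} {\it Acc}_\psi^\setG$, this is equivalent to the existence of some $\setG$ such that the unique run $\rho$ of $\P(\varphi)$ on $w$ satisfies ${\it Acc}_\psi^\setG$ in every component $\G\psi$ of the conjunction. The key observation is that $\P(\varphi)$ is a synchronous product: each component carries exactly the state space $Q_\psi$, the initial state $i_\psi$ and the transition function $\delta_\psi$ of $\R(\psi,\setG)$ (none of which depends on $\setG$), and every component reads the same letter of $w$ in lockstep. Hence the projection of $\rho$ onto the $\G\psi$-component is literally the run of $\R(\psi,\setG)$ on $w$, and satisfies ${\it Acc}_\psi^\setG$ iff $w \in \Lang(\R(\psi,\setG))$; by Theorem~\ref{thm:rabinization} this is the same as $w \in \Lang(\K(\psi,\setG))$. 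Combining these equivalences, $w \in \Lang(\P(\varphi))$ holds iff there exists $\setG$ with $w \in \Lang(\K(\psi,\setG))$ for every $\G\psi$ in the inner conjunction, and a direct appeal to Theorem~\ref{thm:decomp} yields $w \models \F\G\varphi$.

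The step I expect to be most delicate is lining up the two quantifications over $\G\psi$. The inner conjunction of $\P(\varphi)$ is written over $\gsf(\varphi)$, whereas Theorem~\ref{thm:decomp} only requires acceptance for $\G\psi \in \setG$. For the ``if'' direction this mismatch is harmless: restricting a conjunction to a smaller index set can only weaken it, so a witnessing $\setG$ from the product directly serves Theorem~\ref{thm:decomp}. For the ``only if'' direction I would take $\setG$ to be the maximal subset of $\gsf(\F\G\varphi)$ with $w \models \F\G\psi$ for every $\G\psi \in \setG$, and use the monotonicity of $F_\setG$ in $\setG$ together with Lemma~\ref{lem:interpret} to verify that each slave $\K(\psi,\setG)$ with $\G\psi \in \gsf(\varphi)$ still accepts $w$, thereby discharging the full conjunction demanded by the product.
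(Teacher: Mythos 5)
Your overall route --- project the run of $\P(\varphi)$ onto its components, use Theorem~\ref{thm:rabinization} to replace each $\R(\psi,\setG)$ by $\K(\psi,\setG)$, and conclude with Theorem~\ref{thm:decomp} --- is exactly the intended derivation (the paper states the result as an immediate consequence of Theorem~\ref{thm:decomp}). The gap lies in how you resolve the quantifier mismatch in the ``only if'' direction. Your plan is to take $\setG$ maximal and then show that \emph{every} slave $\K(\psi,\setG)$ with $\G\psi\in\gsf(\varphi)$ accepts $w$, so that the conjunction over all of $\gsf(\varphi)$ is discharged. That claim is false, and neither the monotonicity of $F_\setG$ nor Lemma~\ref{lem:interpret} can rescue it: if $\G\psi\in\gsf(\varphi)\setminus\setG$ then $w\not\models\F\G\psi$ while $w\models\F\G\psi'$ for every $\G\psi'\in\setG$, so the ($\Leftarrow$) direction of Theorem~\ref{thm:slaves} applied to $\psi$ forces $w\notin\Lang(\K(\psi,\setG))$ --- a slave for a $\G$-subformula that fails must reject once all of its assumptions are true. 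Concretely, for $\varphi=a\vee\G b$ and $w=\{a\}^\omega$ we have $w\models\G\varphi$, yet $\Lang(\K(b,\setG))=\Lang(\F\G b)$ for every $\setG$ and $w\not\models\F\G b$, so a conjunction ranging over all $\G$-subformulae is violated in every disjunct.

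The way out is not a stronger argument but a corrected reading of the acceptance condition: the disjunct of $\P(\varphi)$ indexed by $\setG$ is meant to conjoin ${\it Acc}_\psi^\setG$ only over $\G\psi\in\setG$ (this is how the analogous condition ${\it Acc}^\setG_\pi={\it M}^\setG_\pi\wedge\bigwedge_{\psi\in\setG}{\it Acc}_\pi(\psi)$ is written in Section~\ref{sec:gen}, and it is the only reading under which Theorem~\ref{thm:decomp2} is true; likewise ``there is $\setG$ such that $w\in \Lang(\P(\varphi))$'' must be read as ``$w$ is accepted via the disjunct indexed by $\setG$''). With that reading your ``if'' direction goes through verbatim, and the ``only if'' direction becomes trivial: the $\setG$ supplied by Theorem~\ref{thm:decomp} makes exactly the conjuncts indexed by $\setG$ hold, which is all the disjunct asks for.
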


\section{DRAs for Arbitrary Formulae}\label{sec:gen}

In order to explain the last step of our procedure, consider the following example.

\begin{example}
\label{ex:lastex}
Let $\varphi = b \wedge \X b \wedge \G\psi$, where $\psi = a \wedge \X (b \U c)$ and let ${\it Ap}=\{a,b,c\}$.
The \ka automaton $\K(\psi)$ is shown in the middle of Figure \ref{fig:master}. Its corresponding
Rabin automaton $\R(\psi)$ is shown on the right, where the state $(\mathbf{i},\mathbf{j})$ indicates
that $\mathbf{\psi}$ has rank $\mathbf{i}$ and $b\U c$ has rank $\mathbf{j}$. We have
${\it fail} = \{t_1,t_5,t_6,t_7,t_8\}, {\it buy}(\mathbf{1})  =  \emptyset, {\it succeed}(\mathbf{1}) =  \{t_4,t_7\}$ and ${\it buy}(\mathbf{2})  =  \{t_3 \}, {\it succeed}(\mathbf{2})  =  \emptyset$.
\end{example}

Both $\K(\psi)$ and $\R(\psi)$ recognize $\Lang(\F\G\psi)$, but not $\Lang(\G\psi)$.
In particular, even though any word whose first letter does not contain $a$ can be immediately rejected, $\K(\psi)$
fails to capture this. This is a general problem of \ka automata: they can never ``reject (or accept) in finite time'' 
because the acceptance condition refers to an infinite number of tokens.

\begin{figure}
\myspace
\begin{center}
\hspace*{-5cm}
\begin{minipage}{4cm}
\begin{tikzpicture}[x=2.5cm,y=1.5cm,font=\footnotesize,initial text=,outer sep=1mm]
\tikzstyle{acc}=[double]
\node[state,initial] (a) at (0,0) {$\varphi$};
\node[state] (b) at (0,-1) {$b \wedge (b \U c) \wedge \G\psi$};
\node[state] (c) at (0,-2) {$(b \U c) \wedge \G\psi$};
\node[state] (d) at (1,-1) {$\false$};

\path[->] (a) edge node[left]{$ab$} (b) 
              edge node[above]{\ $\bar{a}+\bar{b}$} (d)
          (b) edge node[left] {$ab$} (c)
              edge node[below] {$\bar{a}+\bar{b}$} (d)
          (c) edge [loop left, max distance=6mm,in=250,out=290,looseness=15] node[below]{$a(b+c)$} (c)
              edge [bend left=-30] node[right] {$\bar{a}+(\bar{b}\bar{c})$} (d)
(d) edge [loop above, max distance=6mm,in=110,out=70,looseness=15] node[above]{${\it true}$} (d)
;
;
\end{tikzpicture}
\end{minipage}
\quad
\begin{minipage}{3.5cm}
\begin{tikzpicture}[x=2cm,y=1.5cm,font=\footnotesize,initial text=,outer sep=1mm]
\tikzstyle{acc}=[double]
\node[state,initial] (1) at (0,0) {$\psi$};
\node[state] (2) at (-0.4,-1) {$b \U c$};
\node[state] (3) at (0.4,-1) {$\false$};
\node[state,acc] (4) at (-0.4,-2) {$\true$};
\node at (0,-3){};

\path[->] (1) edge node[left]{$a$} (2)
              edge node[right]{$\bar{a}$} (3)
          (2) edge node[below] {$\bar{b}\bar{c}$} (3)
              edge node[right] {$c$} (4)
          (2) edge[loop left, max distance=6mm,in=160,out=200,looseness=15] node[left]{$b \bar{c}$} (2)
(3) edge [loop below, max distance=6mm,in=290,out=250,looseness=15] node[below]{${\it true}$} (3)
(4) edge [loop below, max distance=6mm,in=290,out=250,looseness=15] node[below]{${\it true}$} (4)
;
;
\end{tikzpicture}
\end{minipage}
~
\begin{minipage}{4cm}
\begin{tikzpicture}[x=3cm,y=1.5cm,font=\footnotesize,initial text=,outer sep=1mm]
\tikzstyle{acc}=[double]
\node[state,initial] (1) at (0,0) {$(\mathbf{1},\bot)$};
\node[state] (2) at (0.0,-2.0) {$(\mathbf{2},\mathbf{1})$};

\path[->] (1) edge [bend left=-30] node[left]{$t_2\colon a$} (2)
          (2) edge [bend left=-55] node[right] {$t_7\colon \bar{a}c$} (1)
              edge [bend left=-43] node[left] {$t_8 \colon \bar{a}\bar{b}\bar{c}$} (1)
          (1) edge[loop left, max distance=6mm,in=10,out=350,looseness=15] node[right]{$t_1 \colon \bar{a}$} (1)
          (2) edge[loop left, max distance=6mm,in=260,out=230,looseness=15] node[left]{$t_4 \colon ac$} (2)
          (2) edge[loop left, max distance=6mm,in=310,out=280,looseness=15] node[right]{$t_5 \colon a\bar{b}\bar{c}$} (2)
          (2) edge[loop left, max distance=6mm,in=190,out=170,looseness=15] node[below]{$t_3 \colon ab\bar{c}$} (2)
	  (2) edge[loop left, max distance=6mm,in=10,out=350,looseness=15] node[below]{$t_6 \colon \bar{a}b\bar{c}$} (2)
;
;
\end{tikzpicture}
\end{minipage}
\hspace*{-5cm}
\end{center}
\myspace
\caption{Automata $\M(\varphi)$, $\K(\psi)$, and $\R(\psi)$ for $\varphi = b \wedge \X b \wedge \G\psi$ and $\psi = a \wedge \X (b \U c)$.}
\label{fig:master}
\myspace\myspace\myspace
\end{figure}

\subsection{Master Transition System}

The ``accept/reject in finite time'' problem can be solved with the help of the {\em master transition system}
(an automaton without an accepting condition).

\begin{definition}
Let $\varphi$ be a formula. The {\em master transition system} for $\varphi$ is the
tuple $\M(\varphi) = (\Reach(\varphi), \varphi, \aft)$.
\end{definition}

\noindent The master transition system for the formula of Example \ref{ex:lastex} is shown on the left of 
Figure \ref{fig:master}. Whenever we enter state $\false$, we have $\aft(\varphi, w)=\false$ for the
word $w$ read so far, and so the run is not accepting.

Consider now the word $w = \{a,b,c\}^\omega$, which clearly satisfies $\varphi$. How do master $\M(\varphi)$
and slave $\K(\psi)$ decide together that $w \models \varphi$ holds? Intuitively, $\K(\psi)$ accepts, and tells the master
that $w \models \F\G\psi$ holds. The master reaches the state $(b \, \U \, c) \wedge \G\psi$ and stays there forever. 
Since she knows that $\F\G\psi$ holds, the master deduces that $w \models \varphi$ holds
if $w \models \F\G (b \, \U \, c)$. But where can it get this information from?

At this point the master resorts to Lemma \ref{lem:interpret}: the slave $\K(\psi)$ (or, more precisely, 
its Rabin automaton $\R(\psi)$) not only tells the master that $w$ satisfies $\F\G\psi$, but also at which rank, 
and so that $w_j$ satisfies $\rankf{w_{0j}}$ for almost every $j \in \mathbb{N}$. In our example,
during the run $w = \{a,b,c\}^\omega$, all tokens flow down the path $a \wedge \X(b\,\U \,c) \tran{a} b \, \U \, c \tran{c} \true$ ``in lockstep''.
No token buys any other, and all tokens of rank $\mathbf{1}$ succeed. The corresponding run of
$\R(\psi)$ executes the sequence $t_2t_4^\omega$ of transitions, stays in $(\mathbf{2},\mathbf{1})$
forever, and accepts at rank $\mathbf{1}$. So we have
$\rankf{w_{0j}} = (b \, \U \, c) \wedge \psi$ for every $j \geq 0$, and therefore the slave tells the master
that $w_j \models (b \, \U \,c)$ for almost every $j \in \mathbb{N}$.

So in this example the information required by the master is precisely the additional information supplied by $\K(\psi)$ due to Lemma \ref{lem:interpret}. 
The next theorem shows that this is always the case. 

\begin{theorem}
\label{thm:master}
Let $\varphi$ be a formula and let $w$ be a word. Let $\setG$ 
be the set of formulae $\G\psi \in \gsf(\varphi)$ such that 
$w \models \F\G\psi$.   
We have $w \models \varphi$ if{}f for almost every $i \in \mathbb{N}$:
\begin{equation}
\bigg(\bigwedge_{\G\psi \in \setG} \big( \G\psi \; \wedge \; \rankf{\psi,w_{0i}}\big) \wedge\bigwedge_{\G\psi\in\gsf(\varphi)\setminus\setG}\neg \G\psi \bigg) \models_p \aft(\varphi, w_{0i}) \ . \tag{$*$} 
\end{equation}
\end{theorem}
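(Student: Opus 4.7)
The plan is to prove each direction of the biconditional separately.

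For $(\Leftarrow)$, I would pick an $i$ large enough that simultaneously: (a)~$(*)$ holds at $i$; (b)~$w_i \models \G\psi$ for every $\G\psi \in \setG$, which eventually holds because $w \models \F\G\psi$; (c)~$w_i \models \rankf{\psi, w_{0i}}$ for every $\G\psi \in \setG$, by Lemma~\ref{lem:interpret}(2); and (d)~$w_i \models \neg\G\psi$ for every $\G\psi \in \gsf(\varphi) \setminus \setG$, which holds at \emph{every} $i$ since $w \not\models \F\G\psi$ means $\G\psi$ fails on every suffix $w_i$. Then $w_i$ semantically satisfies the LHS of $(*)$; since propositional implication entails semantic implication, $w_i \models \aft(\varphi, w_{0i})$, and Lemma~\ref{lem:fundaft} then yields $w \models \varphi$.

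For $(\Rightarrow)$ I would proceed by structural induction on the formula, after strengthening the statement so that it applies to subformulae paired with arbitrary words: for every formula $\xi$ and every word $v$ with $v \models \xi$, the analogue of $(*)$ for $\xi$, $v$, and $\setG_v^\xi := \{\G\psi \in \gsf(\xi) : v \models \F\G\psi\}$ holds for almost every $i$. The base and Boolean cases are routine: $\aft$ distributes over Boolean connectives, and the LHS for a compound $\xi$ propositionally implies the LHS for each subformula because the $\gsf$ sets combine compatibly and both the positive and negative conjuncts carry over.

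The temporal cases $\xi \in \{\X\xi', \F\xi', \xi_1 \U \xi_2\}$ apply the IH to a subformula at a shifted word: $v_1$ for $\X\xi'$, and a witness $v_k$ (satisfying $\xi'$ or $\xi_2$ with the appropriate prefix condition on $\xi_1$) for $\F\xi'$ and $\xi_1 \U \xi_2$. The IH yields a propositional implication whose LHS involves $\rankf{\psi, (v_k)_{0,i-k}}$, while the current LHS involves $\rankf{\psi, v_{0i}}$; to bridge the gap, Lemma~\ref{lem:interpret}(1) rewrites both as explicit conjunctions of $\aft(\psi, \cdot)$-terms for large $i$, and the $v$-version contains at least the conjuncts of the $v_k$-version, so propositionally implies it. Combined with the $\aft$-unfolding identities (e.g.\ $\aft(\xi', v_{ki})$ appears as a disjunct of $\aft(\F\xi', v_{0i})$), this lifts the IH to $\xi$.

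The crucial case is $\xi = \G\xi'$: since $v \models \G\xi'$ forces $\ind{v}{\xi'} = 0$, Lemma~\ref{lem:interpret}(1) (in its version for the general slave of $\xi'$) gives $\rankf{\xi', v_{0i}} \equiv_p \bigwedge_{k=0}^i \aft(\xi', v_{ki})$ for almost every $i$, which is exactly $\aft(\G\xi', v_{0i})$ modulo the $\G\xi'$ conjunct supplied by $\G\xi' \in \setG_v^\xi$; the IH on $\xi'$ then takes care of the nested $\G$-subformulae. The main obstacle is expected to be the bookkeeping in the temporal cases: correctly relating $\rankf$ across word shifts, coordinating $\aft$ (used by the master and in the statement) with the $\aftg$-based slave states, and verifying that the negative conjuncts $\neg\G\psi$ are precisely what is needed to exclude propositional assignments that would falsify $\aft(\varphi, w_{0i})$ while satisfying the rest of the LHS.
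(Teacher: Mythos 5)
Your $(\Leftarrow)$ direction is correct and is essentially the paper's: it is Lemma~\ref{lem:master3}\,$(\Leftarrow)$ combined with Lemma~\ref{lem:interpret}, checking that $w_i$ semantically satisfies the left-hand side of $(*)$ for all large $i$ and concluding via Lemma~\ref{lem:fundaft}. The $(\Rightarrow)$ direction has the right overall shape (structural induction), but the case $\xi=\G\xi'$ --- the one you rightly call crucial --- contains a genuine gap. First, the generalized Lemma~\ref{lem:interpret}(1) identifies $\rankf{\xi',v_{0i}}$ with $\bigwedge_{k=0}^{i}\aftg(\xi',v_{ki})$, \emph{not} with $\bigwedge_{k=0}^{i}\aft(\xi',v_{ki})$ as you write; the two differ exactly on the nested $\G$-subformulae of $\xi'$, while $\aft(\G\xi',v_{0i})$ is built from the full $\aft(\xi',v_{ki})$. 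To upgrade $\aftg(\xi',v_{ki})$ to $\aft(\xi',v_{ki})$ using the positive and negative $\G$-conjuncts, you must do so for \emph{every} $k\le i$ simultaneously. Your inductive hypothesis cannot deliver this: it is an ``almost every $i$'' statement about the shifted word $v_k$, so it yields the implication into $\aft(\xi',v_{ki})$ only for $i$ beyond a threshold depending on $k$, and since $k$ ranges up to $i$, the shifts with $k$ close to $i$ are never covered, however large $i$ is. (In the $\X$, $\F$ and $\U$ cases only finitely many fixed shifts occur, so taking the maximum of the thresholds works; for $\G$ the number of shifts grows with $i$ and this argument breaks down.)

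The paper avoids exactly this trap by splitting the induction in two. Lemma~\ref{lem:master1} proves the purely propositional implication $\aftg(\varphi,w_{0i})\wedge\Omega(\varphi,w_{0i})\models_p\aft(\varphi,w_{0i})$ for \emph{every} $i$, so in its $\G$-case it can be instantiated at all shifts $j\le i$ without any threshold issue; Lemma~\ref{lem:auxg} separately supplies the ``almost every $i$'' semantic fact $\setG\models_p\aftg(\varphi,w_{0i})$ under $w\models\varphi$; these are combined in Lemmas~\ref{lem:master2} and~\ref{lem:master3}, and the explicit conjunction $\bigwedge_{j=\ind{w}{\psi}}^{i}\aftg(\psi,w_{ji})$ is traded for $\rankf{\psi,w_{0i}}$ only at the very last step via Lemma~\ref{lem:interpret}. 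To repair your proof you would need to extract an analogous every-$i$ propositional lemma (with the explicit $\aftg$-conjunctions in place of $\rankf{\cdot}$) rather than inducting directly on the a.e.-statement $(*)$.
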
\myspaceb

The automaton recognizing $\varphi$ is a product of the automaton $\P(\varphi)$ defined in Section \ref{ssec:product}, 
and $\M(\varphi)$. The run of $\P(\varphi)$ of a word $w$ 
determines the set $\setG \subseteq \gsf(\varphi)$ such that $w \models \F\G\psi$ if{}f 
$\psi \in \setG$. Moreover, each component of $\P(\varphi)$ accepts at a certain rank, 
and this determines the formula $\rankf{\psi,w_{0i}}$ for every $i \geq 0$
(it suffices to look at the state reached by the component of $\P(\varphi)$
in charge of the formula $\psi$). By Theorem \ref{thm:master},
it remains to check whether 
$(*)$ eventually 
holds. This is done with the help of $\M(\varphi)$, which ``tracks'' 
$\aft(\varphi, w_{0i})$. To check the property, we turn the accepting condition into a disjunction
not only on the possible $\setG \subseteq \gsf(\varphi)$, but also on the possible
rankings that assign to each formula $\G\psi \in \setG$ a rank. This corresponds to letting the
product guess which $\G$-subformulae will hold, and at which rank they will be accepted. The slaves check 
the guess, and the master checks that it eventually only visits states implied by the guess. 

\myspaceb

\subsection{The GDRA $\A(\varphi)$}

We can now formally define the final automaton $\A(\varphi)$ recognizing $\varphi$.
Let $\P(\varphi) = (Q_\P, i_\P, \delta_\P, {\it Acc}_\P)$ be the product automaton 
described in Section \ref{ssec:product},
and let $\M(\varphi) = (\Reach(\varphi), \varphi, \aft)$.
We let $$\A(\varphi) = (\Reach(\varphi) \times Q_P, (\varphi, i_\P), \aft \times \delta_P, {\it Acc})$$
where the accepting condition 
${\it Acc}$ is defined top-down as follows:
\begin{itemize}
\item ${\it Acc}$ is a disjunction containing a disjunct 
${\it Acc}^\setG_\pi$ for each pair $(\setG, \pi)$, 
where $\setG \subseteq \gsf(\varphi)$ and
$\pi$ is a mapping assigning to each $\psi \in \setG$ a
rank, i.e., a number between $\mathbf{1}$ and the
number of Rabin pairs of $\R(\varphi,\setG)$.
\item The disjunct ${\it Acc}^\setG_\pi$ is a conjunction of the form 
$\displaystyle {\it Acc}^\setG_\pi = {\it M}^\setG_\pi \wedge  \bigwedge_{\psi \in \setG}{\it Acc}_\pi(\psi)$.
\item Condition ${\it Acc}_\pi(\psi)$ states that $\R(\psi,\setG)$
accepts with rank $\pi(\psi)$ for every $\psi \in \setG$. 
It is therefore a Rabin condition with only one Rabin pair.
\item Condition ${\it M}^\setG_\pi$ states that $\A(\varphi)$ eventually stays within a subset
$F$ of states defined as follows. Let 
$(\varphi', r_{\psi_1}, \ldots, r_{\psi_k}) \in \Reach(\varphi) \times Q_P$, where $r_{\psi}$ is a ranking 
of the formulae of $\Reachg(\psi)$ for every $\G\psi \in \gsf(\varphi)$, and let $\mathcal F(r_\psi)$ 
be the conjunction of the states of $\K(\psi)$ to which $r_{\psi}$ assigns rank $\pi(\psi)$ or higher. Then
$$(\varphi', r_{\psi_1}, \ldots, r_{\psi_k}) \in F \ \mbox{ if{}f }\ 
\bigg( \bigwedge_{\G\psi \in \setG} \big(\G\psi \wedge \mathcal F(r_\psi)\big) \wedge\bigwedge_{\G\psi\in\gsf(\varphi)\setminus\setG}\neg \G\psi \bigg) \models_p \varphi'  . $$
Notice that ${\it M}^\setG_\pi$ is a co-B\"uchi condition, and so a Rabin condition with only one pair.
\end{itemize}

\begin{theorem}
\label{thm:finalthm}
For any LTL formula $\varphi$, $\Lang(\A(\varphi))=\Lang(\varphi)$. 
\end{theorem}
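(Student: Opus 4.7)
The plan is to establish the two inclusions $\Lang(\varphi) \subseteq \Lang(\A(\varphi))$ and $\Lang(\A(\varphi)) \subseteq \Lang(\varphi)$, leveraging Theorem~\ref{thm:master} to reduce the semantic question to the combination of master tracking and slave ranks.

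For the forward inclusion, I would start from $w \models \varphi$ and let $\setG = \{\G\psi \in \gsf(\varphi) : w \models \F\G\psi\}$ be the true set. My candidate accepting disjunct is ${\it Acc}^\setG_\pi$ for a ranking $\pi$ to be read off the slaves. By inductively applying Theorem~\ref{thm:decomp} through the $\G$-subformula hierarchy and then Theorem~\ref{thm:rabinization}, each slave $\R(\psi,\setG)$ with $\G\psi \in \setG$ accepts $w$; I take $\pi(\psi)$ to be the rank of acceptance, so every ${\it Acc}_\pi(\psi)$ holds by definition. For ${\it M}^\setG_\pi$ I would invoke Theorem~\ref{thm:master} to obtain the implication $(*)$ for almost every $i$, and then use Lemma~\ref{lem:interpret} to identify the formula $\rankf{\psi, w_{0i}}$ appearing in $(*)$ with the conjunction $\mathcal F(r_\psi)$ read off the state of $\R(\psi,\setG)$ after $w_{0i}$. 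This places the product state in the set $F$ associated to ${\it M}^\setG_\pi$ from some index onwards, giving $w \in \Lang(\A(\varphi))$.

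For the backward inclusion, I would suppose $w$ is accepted via some disjunct ${\it Acc}^{\setG'}_\pi$. The slave conjuncts ${\it Acc}_\pi(\psi)$ force $\R(\psi,\setG')$ to accept $w$ for every $\G\psi \in \setG'$; an inductive use of Theorem~\ref{thm:slaves} then gives $w \models \F\G\psi$ for each $\G\psi \in \setG'$, hence $\setG' \subseteq \setG$ with $\setG$ as above. The co-Büchi condition ${\it M}^{\setG'}_\pi$ supplies, for almost every $i$, a propositional implication $\text{LHS}(\setG') \models_p \aft(\varphi, w_{0i})$. By Lemma~\ref{lem:interpret} together with $\setG' \subseteq \setG$, the positive part of $\text{LHS}(\setG')$—namely the $\G\psi$ atoms for $\psi \in \setG'$ and the formulae $\mathcal F(r_\psi)$—is semantically true at $w_i$ for large $i$. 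Because $\varphi$ is in negation normal form, $\G$-subformulae appear only positively in every element of $\Reach(\varphi)$, so a monotonicity argument allows me to pass from the propositional implication to semantic truth at $w_i$, yielding $w_i \models \aft(\varphi, w_{0i})$ and hence $w \models \varphi$ by Lemma~\ref{lem:fundaft}.

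The hard part will be the backward direction in the case $\setG' \subsetneq \setG$: here the negative conjuncts $\neg \G\psi$ for $\psi \in \setG \setminus \setG'$ are semantically false at $w_i$, so $\text{LHS}(\setG')$ cannot be instantiated on $w_i$ naively. The resolution rests on the positive polarity of $\G$-atoms in $\aft(\varphi, w_{0i})$, which makes the implication monotone in those atoms and lets one replace false negative literals by their true counterparts without breaking it. Getting this polarity-based transfer argument right—and in particular verifying that the formulae $\mathcal F(r_\psi)$ from the slaves, which may themselves mention nested $\G$-subformulae, are treated consistently across the nesting—is where the bulk of the formal work will lie and is precisely what the negative conjuncts in Theorem~\ref{thm:master} were added to support.
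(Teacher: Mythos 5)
Your overall strategy coincides with the paper's: both inclusions are reduced to Theorem~\ref{thm:master}, with Lemma~\ref{lem:interpret} identifying $\rankf{\psi,w_{0i}}$ with the conjunction $\mathcal F(r_\psi)$ read off the slave's state, and Theorems~\ref{thm:slaves}, \ref{thm:decomp} and \ref{thm:decomp2} supplying the correspondence between slave acceptance and $w \models \F\G\psi$. Your forward inclusion is essentially the paper's proof. Where you go beyond it is the backward inclusion: you correctly observe that Theorem~\ref{thm:master} is stated only for $\setG$ equal to the \emph{exact} set of $\G\psi$ with $w \models \F\G\psi$, whereas acceptance via ${\it Acc}^{\setG'}_\pi$ only yields $\setG' \subseteq \setG$. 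The paper's two-line argument applies Theorem~\ref{thm:master} without comment on this point, so you have put your finger on a step the paper leaves implicit.

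However, your proposed resolution of that step is not yet a proof, and the obstruction sits exactly where you locate it. To exploit $\mathrm{LHS}(\setG') \models_p \aft(\varphi,w_{0i})$ you must exhibit a propositional valuation satisfying $\mathrm{LHS}(\setG')$; the natural candidate is the semantic valuation at $w_i$ with the atoms $\G\psi$ for $\psi \in \setG\setminus\setG'$ flipped to false, after which positivity of $\aft(\varphi,w_{0i})$ in those atoms lets you flip them back on the right. But the conjuncts $\mathcal F(r_\psi)$ for $\psi \in \setG'$ are themselves in negation normal form and hence \emph{positive} in the very atoms you are falsifying; Lemma~\ref{lem:interpret} only guarantees their truth under the unmodified valuation, so the modified valuation need not satisfy $\mathrm{LHS}(\setG')$ at all, and monotonicity of the right-hand side cannot repair this. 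You need a further argument about the slaves (for instance, relating acceptance of $\K(\psi,\setG')$ to the truth of $\mathcal F(r_\psi)$ under the restricted valuation, or showing one may always pass to the maximal $\setG$ with a suitable ranking); as it stands the backward inclusion is not established. A side remark: the negative conjuncts were introduced to make the \emph{completeness} half of Theorem~\ref{thm:master} go through (see the case $w \not\models \F\G\varphi'$ in Lemma~\ref{lem:master1}, where $\neg\G\varphi'$ is needed to collapse the left-hand side to $\false$); far from supporting the transfer you describe, they are precisely what creates the difficulty you are wrestling with.
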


\section{The Alternation-Free Linear-Time $\mu$-calculus}
The linear-time $\mu$-calculus is a linear-time logic
with the same expressive power as B\"uchi automata and DRAs (see e.g. \cite{DBLP:conf/popl/Vardi88,DBLP:conf/fsttcs/Dam92}.
It extends propositional logic with the next operator
$\X$, and least and greatest fixpoints. This section 
is addressed to readers familiar with this logic.  We take as syntax
$$\varphi::=  \true \mid \false \mid a \mid \neg a \mid y \mid  \varphi\wedge\varphi \mid \varphi\vee\varphi \mid \X\varphi \mid \mu x.\varphi \mid \nu x. \varphi $$
\noindent where $y$ ranges over a set of variables. We assume that if $\sigma y. \varphi$ and $\sigma z. \psi$ are distinct subformulae of a formula, then
$y$ and $z$ are also distinct. A formula is {\em alternation-free} if 
for every subformula $\mu y. \varphi$ ($\nu y. \varphi$)
no path of the syntax tree leading from $\mu y$ ($\nu y$) to $y$ 
contains an occurrence of $\nu z$ ($\mu z$) for some variable $z$. 
For instance, $\mu y. (a \vee \mu z. (y \vee \X z)$ is alternation-free,
but $\nu y.\mu z ((a \wedge y) \vee \X z)$ is not. It is well known that
the alternation-free fragment is strictly more expressive than LTL 
and strictly less expressive than the full linear-time $\mu$-calculus. 
In particular, the property ``$a$ holds at every even moment'' is not 
expressible in LTL, but corresponds to $\nu y. (a \wedge \X\X y)$.

Our technique extends to the alternation-free linear-time
$\mu$-calculus. We have refrained from presenting 
it for this more general logic because it is less well known 
and formulae are more difficult to read. 
We only need to change the definition of the 
functions $\aft$ and $\aftg$. For the common part of the syntax
(everything but the fixpoint formulae) the definition is identical. For the 
rest we define
$$\begin{array}[t]{rcl}
\aft(\mu y.\varphi,\nu) &= &  \aft(\varphi,\nu)\vee \mu y.\varphi\\
\aft(\nu y.\varphi,\nu) &= &  \aft(\varphi,\nu)\wedge \nu y .\varphi\\
\end{array}
\qquad\qquad
\begin{array}[t]{rcl}
\aftg(\mu y.\varphi,\nu)&= &  \aftg(\varphi,\nu)\vee \mu y.\varphi\\
\aftg(\nu y.\varphi,\nu)&= &  \nu y.\varphi
\end{array}
$$
The automaton $\A(\varphi)$ is a product of automata, one for every
$\nu$-subformula of $\varphi$, and a master transition system.
Our constructions can be reused, and the proofs
require only technical changes in the structural inductions.

\section{Experimental results}

We compare the performance of the following tools and methods:
\begin{enumerate}
 \item[(T1)] ltl2dstar \cite{ltl2dstar} implements and optimizes \cite{DBLP:conf/wia/KleinB07} Safra's construction \cite{DBLP:conf/focs/Safra88}. It uses LTL2BA \cite{ltl2ba} to obtain the non-deterministic B\"uchi automata (NBA) first. Other translators to NBA may also be used, such as Spot \cite{spot-atva13} or LTL3BA \cite{ltl3ba} and in some cases may yield better results (see \cite{DBLP:conf/lpar/BlahoudekKS13} for comparison thereof), but LTL2BA is recommended by ltl2dstar and is used this way in PRISM \cite{prism}.
 \item[(T2)] Rabinizer \cite{atva12} and Rabinizer 2 \cite{atva13} implement a direct construction based on \cite{cav12} for fragments LTL$(\F,\G)$ and LTL$_{\setminus\G\U}$, respectively. The latter is used only on formulae not in LTL$(\F,\G)$.
 \item[(T3)] LTL3DRA \cite{DBLP:conf/atva/BabiakBKS13} which implements a construction via alternating automata, which is ``inspired by \cite{cav12}'' (quoted from \cite{DBLP:conf/atva/BabiakBKS13}) and performs several optimizations.
 \item[(T4)] Our new construction. Notice that we produce a state space with a logical structure,
which permits many optimizations; for instance, one could incorporate
the suspension optimization of LTL3BA \cite{DBLP:conf/spin/BabiakBDKS13}. 
However, in our prototype implementation we use only the following optimization:
In each state we only keep track of the slaves for formulae $\psi$ that are still ``relevant'' for 
the master's state $\varphi$, i.e. $\varphi[\psi/\true]\not\equiv_p\varphi[\psi/\false]$. 
For instance, after reading $\emptyset$ in $\G\F a\vee (b\wedge \G\F c)$, it is no longer
interesting to track if $c$ occurs infinitely often.
\end{enumerate}
 
\begin{table}[t]
\myspace
{\footnotesize
$$\begin{array}{|l|r|r|r|r|}
  \hline \multicolumn{1}{|c|}{\text{Formula}}	&	 \multicolumn{1}{c|}{\text{T1}}	&	\text{T2}	&       \text{T3}	&	\text{T4}					\\	\hline
\F\G a\vee\G\F b 	&	4	&	4	&	1	&	1					\\	
(\F\G a\vee\G\F b)\wedge(\F\G c\vee\G\F d) 	&	11\,324	&	18	&	1	&	1					\\	
\bigwedge_{i=1}^3(\G\F a_i\rightarrow\G\F b_i)	&	1\,304\,706	&	462	&	1	&	1					\\	
\bigwedge_{i=1}^2(\G\F a_i\rightarrow\G\F a_{i+1})	&	572	&	11	&	1	&	1					\\	
\bigwedge_{i=1}^3(\G\F a_i\rightarrow\G\F a_{i+1})	&	290\,046	&	52	&	1	&	1					\\	\hline
(\X(\G r \vee r \U (r \wedge s \U p)))\U(\G r \vee r \U (r\wedge s))	&	18	&	9	&	8	&	8					\\	
p \U(q\wedge \X(r\wedge (\F(s\wedge \X(\F(t\wedge \X(\F(u\wedge \X \F v))))))))	&	9	&	13&	13	&	13					\\	
(\G \F (a \wedge \X \X b) \vee \F \G b) \wedge \F \G (c \vee (\X a \wedge \X \X b))	&	353	&	73	&	-	&	12					\\	
\G \F (\X \X \X a \wedge \X \X \X \X b) \wedge \G \F (b \vee \X c) \wedge \G \F (c \wedge \X \X a)	&	2\,127	&	169	&	-	&	16					\\	
(\G \F a \vee \F \G b) \wedge (\G \F c \vee \F \G (d \vee \X e))	&	18\,176	&	80	&	-	&	2					\\	
(\G \F (a \wedge \X \X c) \vee \F \G b) \wedge (\G \F c \vee \F \G (d \vee \X a \wedge \X \X b))	&	?	&	142	&	-	&	12					\\	
a \U b \wedge (\G \F a \vee \F \G b) \wedge (\G \F c \vee \F \G d)\vee 	&	640\,771	&	210	&	8	&	7					\\
\qquad\qquad\vee a \U c \wedge (\G \F a \vee \F \G d) \wedge (\G \F c \vee \F \G b)		&&&&\\\hline
\F \G ((a \wedge \X \X b \wedge \G \F b)\U(\G(\X \X ! c \vee \X \X (a\wedge b))))	&	2\,053	&	-	&	-	&	11					\\	
\G(\F !a \wedge \F (b \wedge \X !c ) \wedge \G \F (a \U d)) \wedge \G \F ((\X d)\U(b \vee \G c))	&	283	&	-	&	-	&	7	\\					\hline
\varphi_{35}:\text{2 cause-1 effect precedence chain}&	6	&	-	&	-	&	6					\\	
\varphi_{40}:\text{1 cause-2 effect precedence chain}&	314	&	-	&	-	&	32					\\	
\varphi_{45}:\text{2 stimulus-1 response chain}&	1\,450	&	-	&	-	&	78					\\	
\varphi_{50}:\text{1 stimulus-2 response chain}&	28	&	-	&	-	&	23						\\
\hline
\end{array}$$
}
\caption{Some experimental results}
\label{table:exp}
\myspace\myspace
\end{table}

Table \ref{table:exp} compares these four tools.
For T1 and T2 we produce DRAs (although Rabinizer 2 can also produce GDRAs).
For T3 and T4 we produce GDRAs with transition acceptance (tGDRAs), which
can be directly used for probabilistic model checking without blow-up \cite{cav13}.
The table shows experimental results on four sets of formulae (see the four parts of the table)
\begin{enumerate}
 \item Formulae of the LTL$(\F,\G)$ fragment taken from (i) BEEM (BEnchmarks for Explicit Model checkers) \cite{beem} and from~\cite{DBLP:conf/cav/SomenziB00} on which ltl2dstar was originally tested~\cite{DBLP:journals/tcs/KleinB06}
(see Appendix~\ref{app:exp}); and (ii) fairness-like formulae. All the formulae were used already in \cite{cav12,DBLP:conf/atva/BabiakBKS13}. Our method usually achieves the same results as the optimized LTL3DRA, outperforming the first two approaches.
 \item Formulae of LTL$_{\setminus\G\U}$ taken from \cite{atva13} and \cite{DBLP:conf/concur/EtessamiH00}. They 
illustrate the problems of the standard approach to handle (i) $\X$ operators inside the scope of other temporal operators and (ii) conjunctions of liveness properties.
 \item Some further formulae illustrating the same phenomenon.
 \item Some complex LTL formulae expressing ``after Q until R'' properties
, taken from \textsc{Spec Pattern} \cite{DBLP:conf/icse/DwyerAC99} (available at \cite{specpatterns}) .
\end{enumerate}

All automata were constructed within a few seconds, with the exception of 
the larger automata generated by ltl2dstar:
it took several minutes for automata over ten thousand states and hours for hundreds of thousands of states. 
The automaton for $\bigwedge_{i=1}^3(\G\F a_i\rightarrow\G\F b_i)$ took even more than a day and ? denotes a time-out after one day.
Not applicability of the tool to the formula is denoted by $-$.
Additional details and more experimental results can be 
found in Appendix~\ref{app:exp}.


\myspace

\section{Conclusions}

We have presented the first direct translation from LTL formulae to deterministic Rabin 
automata able to handle arbitrary formulae. The construction generalizes previous ones for LTL fragments \cite{cav12,atva12,atva13}. 
Given $\varphi$, we compute 
(1) the master, the slaves for each $\G\psi\in \gsf(\varphi)$, and their parallel composition, and 
(2) the acceptance condition: we first guess $\setG \subseteq \gsf(\varphi)$ which are true (this yields the accepting states of slaves), and then guess the ranks (this yields the 
information for the master's co-B\"uchi acceptance condition).

The compositional approach opens the door to many possible optimizations. Since slave automata are
typically very small, we can aggressively try to optimize them, knowing that each reduced state 
in one slave potentially leads to large savings in the final number of states
of the product. So far we have only implemented the simplest optimizations, and we think there 
is still much room for improvement. 

We have conducted a detailed experimental comparison. Our construction 
outperforms two-step approaches that first translate the formula into a B\"uchi automaton and 
then apply Safra's construction. Moreover, despite handling full LTL, 
it is at least as efficient as previous constructions for fragments. Finally, we
produce a (often much smaller) generalized Rabin automaton, which can be directly 
used for verification, without a further translation into a standard Rabin automaton.

\bibliographystyle{alpha} 
\bibliography{refs}

\newpage
\appendix
\newcounter{applemma}
\setcounter{applemma}{1}
\newtheorem{alemma}[applemma]{Lemma}
\renewcommand\theapplemma{\Alph{applemma}}

\section{Proofs of Section~\ref{sec:rabinization}}

\begin{reflemma}{lem:aux}
Let $\varphi$ be a $\G$-free formula and let $w$ be a word. Then  $w \models \varphi$
if{}f $\aft(\varphi, w_{0i}) = \true$  for some $i \in \mathbb{N}$. 
\end{reflemma}
\begin{proof}
($\Leftarrow$): Follows directly from Lemma \ref{lem:fundaft}.

\noindent ($\Rightarrow$): Assume $w \models \varphi$. We proceed by induction on $\varphi$. We consider only some cases, the others are analogous.

\noindent {\bf Case} $\varphi = a$. Since $w \models \varphi$, we have $a \in w[0]$ and so $\aft(\varphi, w_{00})=\true$.

\noindent {\bf Case} $\varphi = \psi' \wedge \psi''$. Then $w \models \psi'$ and $w \models \psi''$. By induction hypothesis
there are $i', i''$ such that $\aft(\psi', w_{0i'})=\aft(\psi'', w_{0i''}) = \true$. Let $i = \max\{i',i''\}$. 
By the definition of $\aft()$ we have $\aft(\psi', w_{0i})=\aft(\psi'', w_{0i}) = \true$. 
Since we work up to propositional equivalence, we get $\aft(\psi' \wedge \psi'', w_{0i}) = \true \wedge \true = 
\true$.

\noindent {\bf Case} $\varphi = \F \varphi'$. Then there is $i \in \mathbb{N}$ such that $w_{i} \models \varphi'$.
By induction hypothesis there is $j \geq i$ such that $\aft(\varphi', w_{ij}) = \true$. By the definition of
$\aft()$ we have 
\begin{eqnarray*}
\aft(\varphi, w_{0j}) & = & \bigvee_{k=0}^{j} \aft(\varphi', w_{kj}) \vee \F\varphi' \\
& = & \left(\bigvee_{k=0}^{i-1} \aft(\varphi', w_{kj})\right) \vee \true \vee \left(\bigvee_{k=i+1}^{j} \aft(\varphi', w_{kj})\right) \vee \F\varphi' \\
& = & \true
\end{eqnarray*}

\noindent {\bf Case} $\varphi = \varphi' \U \varphi''$.  Then there is $i \in \mathbb{N}$ 
such that $w_{i} \models \varphi''$ and $w_{0}, w_1, \ldots, w_{i-1} \models \varphi'$. By induction
hypothesis, there are $j, j_0, j_1, \ldots, j_{i-1}$ such that
$\aft(\varphi'', w_{ij}) = \aft(\varphi', w_{0,j_0}) = \cdots = \aft(\varphi', w_{i-1,j_{i-1}})=\true$.
Let $J = \max\{j, j_0, \ldots, j_{i-1} \}$. 
By the definition of $\aft()$ we have 
$$\aft(\varphi'', w_{iJ}) = \aft(\varphi', w_{0J}) = \cdots = \aft(\varphi', w_{(i-1)J})=\true \ .$$ 
\noindent In particular, we get

\begin{align*}
      & \aft(\varphi' \U \varphi'', w_{0J})  \\
 = \; & \aft(\varphi'', w_{0J}) \vee (\aft(\varphi', w_{0J}) \wedge  \aft(\varphi' \U \varphi'', w_{1J})) \\
 = \; & \aft(\varphi'', w_{0J}) \vee \aft(\varphi' \U \varphi'', w_{1J}) \\
 = \; & \cdots \\
 = \; & \left(\bigvee_{k=0}^{i} \aft(\varphi'', w_{kJ}) \right) \vee \aft(\varphi' \U \varphi'', w_{(i+1)J}) \\
 = \; & \left(\bigvee_{k=0}^{i-1} \aft(\varphi'', w_{kJ}) \right) \vee \true \vee \aft(\varphi' \U \varphi'', w_{(i+1)J}) \\
 = \; & \true
\end{align*}
\qed\end{proof}

\begin{reftheorem}{thm:kretcorrect}
Let $\varphi$ be a $\G$-free formula. Then $\Lang(\K(\varphi)) = \Lang(\F\G\varphi)$.
\end{reftheorem}
\begin{proof}
$\Lang(\K(\varphi)) \subseteq  \Lang(\F\G\varphi)$. Let $w \in \Lang(\K(\varphi))$. By definition,
$\true$ is the only accepting state of $\K(\varphi)$. So, by the definition
of the acceptance condition of \ka automata, the $i$-th token of $\K(\varphi)$ eventually reaches
$\true$ for almost every $i \in \mathbb{N}$. Since after reading
$w_{0j}$ the $i$-th token of $\K(\varphi)$ is in the state $\aft(\varphi, w_{ij})$, 
for almost every $i \in \mathbb{N}$ there is 
$k \geq i$ such that $\aft(\varphi, w_{ik})= \true$.
By Lemma \ref{lem:fundaft}, we have $w_{i\infty} \models \varphi$ for almost every  $i \in \mathbb{N}$.
So $w \models \F\G\varphi$.

\noindent $\Lang(\F\G\varphi) \subseteq \Lang(\K(\varphi))$. Let  $w \in \Lang(\F\G\varphi)$. Then 
$w_{i} \models \varphi$ for almost every  $i \in \mathbb{N}$. By Lemma \ref{lem:aux}, for almost every  $i \in \mathbb{N}$ there is $j \geq i$ such that $\aft(\varphi, w_{1i}) =\true$. So almost every token of $\K(\varphi)$ 
eventually reaches the accepting state, and therefore $\K(\varphi)$ accepts.
\qed\end{proof}

We prove a slight generalization of Lemma \ref{lem:interpret}, valid
for arbitrary formulae, and not only for $\G$-free ones. The generalization is 
formulated in terms of the function $\aftg$ of Definition \ref{def:aftg}. 
Since $\aft$ and $\aftg$ coincide on $\G$-free formulae, the new formulation indeed generalizes the one in the main text. 
 
\begin{reflemma}{lem:interpret}
Let $\varphi$ be an arbitrary (not necessarily $\G$-free) formula and let $w \in \Lang(\R(\varphi))$.
Then 
\begin{itemize}
\item[(1)] $\rankf{\varphi,w_{0j}} = \displaystyle \bigwedge_{k = \ind{w}{\varphi}}^j \aftg(\varphi, w_{kj})$ for almost every $j \in \mathbb{N}$.
\item[(2)] $w_j \models \rankf{\varphi,w_{0j}}$ for almost every $j \in \mathbb{N}$.
\end{itemize}
\end{reflemma}
\begin{proof}
\noindent (1) Since $w \in \Lang(\R(\varphi))$, the word $w$ is accepted by $\R(\varphi)$ at
rank $\rank{w}$. So, by the definition of the acceptance condition, there is a point 
in time, say $i$, such that for every $j \geq i$, the $j$-th token neither fails nor is bought
by a token of rank older than $\rank{w}$ at a non-accepting state. So the $j$-th token
eventually reaches an accepting state. Then, for every $j \geq i$, and so for almost 
every $j$, a token is true at time $j$ if{}f its rank is younger than or equal to $\rank{w}$. 
So both  $\rankf{w_{0j}}$ and $\bigwedge_{k = \ind{w}{\varphi}}^j \aftg(\varphi, w_{kj})$
are equal to the conjunction of the formulae tracked by the true tokens.

\noindent (2) By Lemma~\ref{lem:fundaft}, we have $\displaystyle w_j \models \bigwedge_{k = \ind{w}{\varphi}}^j \aft(\varphi, w_{kj})$ for every $j \geq \ind{w}{\varphi}$, hence also $\displaystyle w_j \models \bigwedge_{k = \ind{w}{\varphi}}^j \aftg(\varphi, w_{kj})$ for every $j \geq \ind{w}{\varphi}$, as $\aftg$ is obtained from $\aft$ by possibly omitting some conjuncts. The claim then follows immediately from (1).
\qed\end{proof}

\newpage
\section{Proofs of Section~\ref{sec:gen-FG}}

To prove Theorem \ref{thm:slaves} we need a lemma, where we use the following notation:
given a set $F$ of formulae, we write $F \models_p \varphi$ as an abbreviation of $\bigwedge_{\psi \in F} \psi \models_p \varphi$. 

\begin{alemma}\label{lem:auxg}
Let $\varphi$ be a formula and let $w$ be a word. If $w \models \varphi$, then
there is $\setG \subseteq \gsf(\varphi)$ such that $w \models \F\G\psi$ for every $\G\psi \in \setG$
and $\setG \models_p \aftg(\varphi, w_{0i})$ for almost every 
$i \in \mathbb{N}$.
\end{alemma}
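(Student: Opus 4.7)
The plan is to proceed by structural induction on $\varphi$. The base cases ($\true$, $a$, $\neg a$) are immediate with $\setG = \emptyset$: from $w \models \varphi$ one has $\aftg(\varphi, w[0]) = \true$, and since $\aftg(\true,\nu) = \true$, the formula $\aftg(\varphi, w_{0i})$ equals $\true$ for all $i \geq 0$, so $\emptyset \models_p \aftg(\varphi, w_{0i})$ trivially. Conjunction is handled by taking the union of the two inductive witnesses $\setG_1 \cup \setG_2$; disjunction by taking the witness for the disjunct that $w$ actually satisfies. For $\X\psi$, we have $\aftg(\X\psi, w_{0i}) = \aftg(\psi, w_{1,i})$ for $i \geq 1$, so applying the IH to $\psi$ on the tail $w_1$ yields a $\setG' \subseteq \gsf(\psi) = \gsf(\X\psi)$ that works for almost every $i \geq 1$, and thus for almost every $i$.

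The substantive cases are $\G\psi$, $\F\psi$, and $\psi_1 \U \psi_2$. For $\G\psi$, take $\setG = \{\G\psi\}$: the hypothesis $w \models \G\psi$ gives both $w \models \F\G\psi$ and, since $\aftg(\G\psi, w_{0i}) = \G\psi$ by definition of $\aftg$, the propositional implication $\setG \models_p \aftg(\G\psi, w_{0i})$ is trivial. For $\F\psi$, a routine induction on $i$ establishes the expansion
\[
\aftg(\F\psi, w_{0i}) \equiv_p \bigvee_{k=0}^{i} \aftg(\psi, w_{ki}) \;\vee\; \F\psi.
\]
Let $k$ witness $w_k \models \psi$; the IH applied to $\psi$ on $w_k$ yields $\setG' \subseteq \gsf(\psi) = \gsf(\F\psi)$ with $\setG' \models_p \aftg(\psi, w_{ki})$ for almost every $i \geq k$. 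Taking $\setG = \setG'$, for such $i$ the $k$-th disjunct of the expansion is propositionally implied by $\setG$, hence so is $\aftg(\F\psi, w_{0i})$.

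For $\psi_1 \U \psi_2$, the analogous expansion, again proved by induction on $i$, is
\[
\aftg(\psi_1 \U \psi_2, w_{0i}) \equiv_p \bigvee_{k=0}^{i}\!\Bigl(\aftg(\psi_2, w_{ki}) \wedge \!\!\bigwedge_{j=0}^{k-1}\!\aftg(\psi_1, w_{ji})\Bigr) \vee \Bigl(\bigwedge_{j=0}^{i}\aftg(\psi_1, w_{ji}) \wedge \psi_1 \U \psi_2\Bigr).
\]
Let $k$ be the least index with $w_k \models \psi_2$, so $w_j \models \psi_1$ for every $j < k$. Apply the IH to $\psi_2$ on $w_k$ to obtain $\setG_2$, and to $\psi_1$ on each $w_j$ ($j < k$) to obtain $\setG_1^j$. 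Set $\setG = \setG_2 \cup \bigcup_{j<k} \setG_1^j \subseteq \gsf(\psi_2) \cup \gsf(\psi_1) \subseteq \gsf(\psi_1 \U \psi_2)$. Property (i) transfers because $w \models \F\G\xi$ iff $w_\ell \models \F\G\xi$ for any fixed $\ell$. Since only finitely many IH applications are combined, the intersection of their ``almost every'' conditions is still cofinite, and for $i$ in this cofinite set the $k$-th disjunct of the expansion is propositionally implied by $\setG$.

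The main obstacle is the $\U$-case: one must establish the explicit $\aftg$-expansion and align the ``almost every $i$'' quantifiers coming from the $k{+}1$ separate inductive applications. Once the algebraic identity is in place the combinatorial step is straightforward, and every other case reduces to a short manipulation of $\aftg$ combined with a single inductive hypothesis.
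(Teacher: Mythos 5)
Your proof is correct and follows essentially the same route as the paper's: a structural induction on $\varphi$ whose substantive cases ($\G$, $\F$, $\U$) rest on the same $\aftg$-unfolding identities — your closed-form disjunctive expansion for $\U$ is just the paper's telescoping chain of propositional implications written out in one step. The only cosmetic difference is that you assemble $\setG$ bottom-up as a union of inductive witnesses, whereas the paper fixes $\setG$ globally as all $\G\psi\in\gsf(\varphi)$ with $w\models\F\G\psi$ and relies on monotonicity of $\models_p$ under extra conjuncts; both are fine.
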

\begin{proof}

Assume $w \models \varphi$, and let $\setG$ be the set of formulae $\G\psi \in \gsf(\varphi)$ such that 
$w \models \F\G\psi$. We proceed by induction on the structure of $\varphi$. We consider only some cases, the others are either trivial or analogous.

\noindent {\bf Case} $\varphi = \varphi_1 \vee \varphi_2$. Then w.l.o.g. $w \models \varphi_1$. 
By induction hypothesis we have $\setG \models_p \aftg(\varphi_1, w_{0i})$ for almost every 
$i \in \mathbb{N}$, and since $\aftg(\varphi, w_{0i}) = \aftg(\varphi_1, w_{0i}) \vee \aftg(\varphi_2, w_{0i})$,
we get $  \setG \models_p \aftg(\varphi, w_{0i})$ for almost every $i \in \mathbb{N}$.

\noindent {\bf Case} $\varphi = \X \varphi'$. Then $w_1 \models \varphi'$. By induction hypothesis
we have $$\setG \models_p \aftg(\varphi', w_{1i}) = \aftg(\aftg(\X\varphi', w[0]), w_{1i}) = \aftg(\X\varphi', w_{0i}) \;. $$

\noindent {\bf Case} $\varphi = \varphi_1 \U \varphi_2$. Then $w_j \models \varphi_2$ for some 
 $j \in \mathbb{N}$, and $w_k \models \varphi_1$ for every $k < j$.
By induction hypothesis, for almost every $i \in \mathbb{N}$ we have $  \setG \models_p \aftg(\varphi_2, w_{ji})$, and 
for every $0 \leq k \leq j-1$ we have $\setG \models \aftg(\varphi_1, w_{ki})$. By the definition of $\aft()$, we get:
$$\begin{array}{rcl}
\setG & \models_p & \aftg(\varphi_2, w_{ji})  \models_p  \aftg(\varphi_1 \U \varphi_2, w_{ji})\\
\setG & \models_p & 
\aftg(\varphi_1, w_{(j-1)i}) \wedge \aftg(\varphi_1 \U \varphi_2, w_{ji})
\models_p 
\aftg(\varphi_1 \U \varphi_2, w_{(j-1)i}) \\
& \cdots & \\
\setG & \models_p & 
\aftg(\varphi_1, w_{1i}) \wedge \aftg(\varphi_1 \U \varphi_2, w_{2i}) \models_p 
\aftg(\varphi_1 \U \varphi_2, w_{1i}) \\
\setG & \models_p & 
\aftg(\varphi_1, w_{0i}) \wedge \aftg(\varphi_1 \U \varphi_2, w_{1i}) \models_p 
\aftg(\varphi_1 \U \varphi_2, w_{0i}) \;.
\end{array}$$

\noindent {\bf Case} $\varphi = \G\varphi'$. Then $w \models \G\varphi'$, and so $\G\varphi' \in \setG$ by the definition of $\setG$. 
In particular, we have  $ \setG \models_p \G\varphi'$, and, since
$\aftg(\varphi, w_{0i})=\G\varphi'$ for every index $i \in \mathbb{N}$, 
we get $ \setG  \models_p  \aftg(\varphi, w_{0i})$  for every $i \in \mathbb{N}$. 

\qed\end{proof}

\begin{reftheorem}{thm:slaves}
Let $\varphi$ be a formula and let $w$ be a word. Then $w \models \F\G\varphi$ if{}f 
there is $\setG \subseteq \gsf(\varphi)$ such that (1) $w \in \Lang(\K(\varphi,\setG))$, and
(2) $w \models \F\G\psi$ for every $\G\psi \in \setG$.
\end{reftheorem}
\begin{proof}

($\Rightarrow$): Let $\setG$ be the set of formulae $\G\psi$ such that  
$w \models \F\G\psi$. Then $\setG$ satisfies (2) by definition. To prove that
it satisfies (1), observe that, since $w \models \F\G\varphi$, we have $w_j \models \varphi$
for almost every $j$. So we can apply Lemma \ref{lem:auxg} to $w_j$, and get that for almost every
$j \in \mathbb{N}$ there is $i \geq j$ such that $\setG \models_p \aftg(\varphi,w_{ji})$. So, for almost all $j \in \mathbb{N}$, the $j$-th token of 
$\K(\varphi,\setG)$ reaches an accepting state $(i-j)$ steps later, and so $\Lang(\K(\varphi,\setG)$ accepts.

($\Leftarrow$): Assume $w \in \Lang(\K(\varphi,\setG))$ and 
$w \models \F\G\psi$ for every $\G\psi \in \setG$. Since $w \in \Lang(\K(\varphi,\setG))$,
for almost every $i \in \mathbb{N}$ there is $j \geq i$ such that $\setG \models_p \aftg(\varphi,w_{ij})$. 
Let $k_1 \in \mathbb{N}$ be large enough so that $w_i \models \G\psi$ for every $i \geq k_1$ and every $\G\psi \in \setG$. Let $k_2$ be large enough
so that for every $i \geq k_2$ there is $j \geq i$ such that $\setG \models_p \aftg(\varphi,w_{ij})$. 
Finally, let $k = \max\{k_1, k_2\}$, and let $i \geq k$ arbitrary. Since $i \geq k_2$, 
there is $j \geq i$ such that $\setG \models_p \aftg(\varphi,w_{ij})$; since $i \geq k_1$, we have $w_j \models \setG$. So we get 
$w_j \models \aftg(\varphi,w_{ij})$, which implies $w_i \models \varphi$. So we have  $w_i \models \varphi$ for every
$i \geq k$, and so $w \models \F\G \varphi$.
\qed\end{proof}

\begin{reftheorem}{thm:decomp}
Let $\varphi$ be a formula and let $w$ be a word. Then 
$w \models \F\G\varphi$ if{}f there is $\setG \subseteq \gsf(\F\G\varphi)$ such that
$w \in \Lang(\K(\psi,\setG))$ for every $\G\psi \in \setG$.
\end{reftheorem}
\begin{proof}
($\Rightarrow$): By repeated application of Theorem \ref{thm:slaves},  
we have $w \in \Lang(\K(\psi,\setG))$ for every $\G$-subformula $\G\psi$ of $\varphi$.

\noindent ($\Leftarrow$): Let $\setG' = \setG \setminus \{\varphi\}$. By Theorem \ref{thm:slaves} it suffices to show that 
$w \models \F\G\psi$ for every $\G\psi \in \setG'$. Let $\G\psi \in \setG'$. 
We proceed by structural induction on the subformula order. If  $\psi$ has no subformulae in $\setG'$, then $\Lang(\K(\psi,\setG))=\Lang(\K(\psi))$, and $w \models \F\G\psi$  follows from
 Theorem \ref{thm:kretcorrect}. Otherwise,
by induction hypothesis, $w \models \F\G\psi'$ for each subformula $\G\psi'$
of $\psi$ such that $\G\psi' \in \setG$. Together with $w \in \Lang(\K(\psi,\setG))$
and Theorem \ref{thm:slaves}, we get $w \models \F\G\psi$.
\qed\end{proof}

\section{Proofs of Section~\ref{sec:gen}}

We prove Theorem \ref{thm:master} with the help of a sequence of lemmas.
Recall that when $w \models \F\G\varphi$ we denote by $\ind{w}{\varphi}$ the
smallest index $j$ such that $w_j \models \G\varphi$. In the course of the proof
we abuse notation and write $\neg \G \varphi$ to denote the formula obtained as 
the formula in negation normal form obtained by pushing the negation inwards.

\begin{alemma}
\label{lem:master1}
Let $\varphi$ be a formula and let $w$ be a word.  Let $\setG$ 
be the set of formulae $\G\psi \in \gsf(\varphi)$ such that 
$w \models \F\G\psi$.  Then
$$\bigg(\aftg(\varphi, w_{0i}) \wedge \bigwedge_{\G\psi \in \setG} \big( \G\psi \; \wedge \; 
\bigwedge_{j=0}^i \aftg(\psi,w_{ji}) \big) \;  \wedge \bigwedge_{\G\psi\in\gsf(\varphi)\setminus\setG}\neg \G\psi\bigg)  \models_p \; \aft(\varphi, w_{0i})$$
\noindent holds for every $i \in \mathbb{N}$.
\end{alemma}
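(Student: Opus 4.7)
I would prove the lemma by structural induction on $\varphi$, uniformly in $w$ and $i$. For a subformula $\chi$, write $\setG_\chi := \{\G\psi \in \gsf(\chi) \mid w \models \F\G\psi\}$, and note that the truth of $\F\G\psi$ is suffix-invariant, so the appropriate set for an IH application to a subformula $\psi$ with shifted word $w_j$ is exactly $\gsf(\psi) \cap \setG$. The slogan is: the only place where $\aft$ and $\aftg$ diverge is at $\G$-nodes, and the extra conjuncts $\aftg$ drops there are precisely what the LHS of the lemma puts back in (when the $\G$-subformula holds) or makes the LHS propositionally inconsistent (when it does not).

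\textbf{Non-$\G$ cases.} When the top connective of $\varphi$ is not $\G$, $\aft$ and $\aftg$ use the same recursive rule at that connective. Hence $\aft(\varphi, w_{0i})$ and $\aftg(\varphi, w_{0i})$ are the same propositional combination of atoms and of subformulas $\aft(\psi, w_{ki})$, resp.\ $\aftg(\psi, w_{ki})$, for proper subformulas $\psi$ and indices $k \leq i$. Applying the IH to each such $\psi$ with shifted word $w_k$ at index $i-k$, and observing that the top-level G-conjuncts of the LHS already contain the sub-conjunctions $\bigwedge_{\ell=0}^{i-k} \aftg(\psi', w_{(k+\ell)i})$ needed by the IH---and likewise contain the corresponding $\neg\G\psi'$ conjuncts, since $\gsf(\psi) \subseteq \gsf(\varphi)$---I deduce $\aftg(\psi, w_{ki}) \wedge (\text{LHS-conjuncts}) \models_p \aft(\psi, w_{ki})$. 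Distributing over the surrounding propositional structure yields the implication for $\varphi$.

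\textbf{The $\G$ case.} Here $\aftg(\G\psi, w_{0i}) = \G\psi$, while a straightforward side-induction on $i$ gives $\aft(\G\psi, w_{0i}) \equiv_p \G\psi \wedge \bigwedge_{j=0}^i \aft(\psi, w_{ji})$. If $\G\psi \in \setG$, the LHS contains the block $\G\psi \wedge \bigwedge_{j=0}^i \aftg(\psi, w_{ji})$. For each $j$, I apply the IH to $\psi$ with shifted word $w_j$ at index $i-j$: by suffix-invariance of $\setG$ and the sub-conjunction observation above, every conjunct the IH demands is syntactically present in the top-level LHS, so I derive $\aft(\psi, w_{ji})$. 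Conjoining over $j$ and with $\G\psi$ produces exactly $\aft(\G\psi, w_{0i})$. If instead $\G\psi \notin \setG$, then $\aftg(\G\psi, w_{0i}) = \G\psi$ together with the explicit $\neg \G\psi$ conjunct renders the LHS propositionally false, so the implication holds vacuously.

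\textbf{Main obstacle.} The substantive work is the bookkeeping in the non-$\G$ cases: confirming that every G-conjunct and $\neg\G$-conjunct demanded by an IH invocation at offset $k$ is a syntactic consequence of the top-level LHS. This relies on two facts that must be carefully recorded: (i) $\gsf(\psi) \subseteq \gsf(\varphi)$ for every subformula $\psi$, so the partition $\setG$ vs.\ $\gsf(\varphi)\setminus\setG$ refines the corresponding partition for $\psi$; and (ii) truncating the index range from $\{0,\dots,i\}$ down to $\{k,\dots,i\}$ only discards conjuncts, so $\bigwedge_{\ell=0}^i \aftg(\psi', w_{\ell i}) \models_p \bigwedge_{\ell=0}^{i-k} \aftg(\psi', w_{(k+\ell)i})$. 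Once these are pinned down, the remainder is routine propositional manipulation.
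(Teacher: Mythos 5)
Your proposal is correct and follows essentially the same route as the paper's proof: structural induction on $\varphi$, using suffix-invariance of $\setG$ to show the side conjuncts for $w_{0i}$ propositionally imply those needed for any shifted invocation $w_{ji}$ (the paper records this as a displayed fact about its abbreviation $\Omega$), handling non-$\G$ connectives by monotone substitution through the shared recursive structure of $\aft$ and $\aftg$, and splitting the $\G$ case on membership in $\setG$, with the $\G\psi \wedge \neg\G\psi$ contradiction making the negative case vacuous. No gaps.
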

\begin{proof}
We introduce the abbreviation

$$\Omega(\varphi, w_{0i}) := \bigwedge_{\G\psi \in \setG} \big( \G\psi \; \wedge \;\bigwedge_{i=0}^j \aftg(\psi,w_{ji}) \big) \wedge \bigwedge_{\G\psi\in\gsf(\varphi)\setminus\setG}\neg \G\psi \ .$$
\noindent Observe that, for every $j \geq 0$, we have $w \models \F\G \psi$ 
if{}f $w_j \models \F\G \psi$. So the set $\setG$, which in principle depends
on $w$, in fact coincides for $w$ and $w_j$, and by the definition of $\Omega$ we get
\begin{equation}
\displaystyle \Omega(\varphi', w_{0i}) \models_p \Omega(\varphi', w_{ji}) \qquad \mbox{ for every $j \geq 0$}
\label{eq15}
\end{equation}

\noindent Making use of the abbtreviation, our goal is to prove that
\begin{equation*}
\aftg(\varphi, w_{0i}) \wedge \Omega(\varphi, w_{0i}) \models_p \aft(\varphi, w_{0i})
\end{equation*}
\noindent holds for every $i \in \mathbb{N}$. We proceed by structural induction on $\varphi$. 
For this it is convenient to extend $\setG$ to a function that, given a formula $\varphi'$, 
returns the set $\setG(\varphi')$ of formulae $\G\psi \in \gsf(\varphi')$ such that $w \models \F\G\psi$. In particular, we have $\setG = \setG(\varphi)$.

We consider only some of the induction cases.

\vspace{0.2cm}
\noindent {\bf Case $\varphi = a$}. Follows from  $\aftg(a, w_{0i}) = \aft(a, w_{0i})$. 

\vspace{0.2cm}
\noindent {\bf Case $\varphi = \varphi_1 \vee \varphi_2$}. Then $\setG(\varphi) = \setG(\varphi_1) \cup \setG(\varphi_2)$. It follows
\begin{equation}
\label{eq01}
\Omega(\varphi, w_{0i})= \Omega(\varphi_1, w_{0i}) \wedge \Omega(\varphi_2, w_{0i})
\end{equation}
\noindent and therefore
$$\begin{array}{rclcl}
&   & \aftg(\varphi_1 \vee \varphi_2, w_{0i}) \wedge \Omega(\varphi, w_{0i}) \\[0.2cm]
& = & (\aftg(\varphi_1, w_{0i}) \vee \aftg(\varphi_2, w_{0i})) \wedge \Omega(\varphi_1, w_{0i}) \wedge
\Omega(\varphi_2, w_{0i}) & \hspace{0.2cm} & \mbox{by def. of $\aftg$, (\ref{eq01})} \\[0.2cm]
& \models_p & \aft(\varphi_1, w_{0i}) \vee \aft(\varphi_2, w_{0i}) & & \mbox{by ind. hyp.}\\[0.2cm]
& = & \aft(\varphi_1 \vee \varphi_2, w_{0i}) &  & \mbox{by def. of $\aft$}
\end{array}$$
 
\vspace{0.2cm}\noindent {\bf Case $\varphi = \F \varphi'$}. 
We collect some facts. First, by the definition of $\aft$ we have 
\begin{eqnarray}
\aft(\varphi, w_{0i}) & = & \bigvee_{j=0}^{i} \aft(\varphi', w_{ji}) \vee \F\varphi' \label{eq11}\\
\aftg(\varphi, w_{0i}) & = & \bigvee_{j=0}^{i} \aftg(\varphi', w_{ji}) \vee \F\varphi' \label{eq12}
\end{eqnarray}
\noindent Second, since $\F\varphi'$ and $\varphi'$ have the same $\G$-subformulae, we have 
$\setG(\F\varphi') = \setG(\varphi')$, and so 
\begin{equation}
\label{eq13}
\Omega(\F\varphi', w_{0i}) = \Omega(\varphi', w_{0i})
\end{equation}
\noindent Finally, applying the induction hypothesis to $\varphi'$ and $w_j$ for every $j \leq i$, we get
\begin{equation}
\label{eq14}
\aftg(\varphi', w_{ji}) \wedge \Omega(\varphi', w_{ji}) \models_p \aft(\varphi', w_{ji}) 
\qquad \mbox{for every $j \leq i$}
\end{equation}
Putting all these facts together, we get

$$\begin{array}{rclcl}
&   & \aftg(\F\varphi', w_{0i}) \wedge \Omega(\F\varphi', w_{0i}) \\[0.2cm]
& = & \left(\bigvee_{j=0}^{i} \aftg(\varphi', w_{ji}) \vee \F\varphi' \right) \wedge \Omega(\varphi', w_{0i}) & \hspace{0.5cm} & \mbox{by (\ref{eq12}), (\ref{eq13})} \\[0.2cm]
& \equiv_p & \left(\bigvee_{j=0}^{i} \aftg(\varphi', w_{ji}) \vee \F\varphi' \right) \wedge \; \bigwedge_{j=0}^i \Omega(\varphi', w_{ji}) & &\mbox{by (\ref{eq15})}\\[0.2cm]
& \models_p & \left(\bigvee_{j=0}^{i} \aft(\varphi', w_{ji})\right) \vee \F\varphi' & & \mbox{by (\ref{eq14})}\\[0.2cm]
& = & \aft(\F\varphi', w_{0i}) &  & \mbox{by (\ref{eq11})}
\end{array}$$

\noindent {\bf Case $\varphi = \G \varphi'$}. 
By the definition of $\aft()$ we have 
\begin{eqnarray}
\aft(\G\varphi', w_{0i})  & = & \bigwedge_{k=0}^{i} \aft(\varphi', w_{ki}) \wedge \G\varphi' \label{eq1}\\
\aftg(\G\varphi', w_{0i}) & = & \G\varphi' \label{eq2}
\end{eqnarray}
\noindent We consider two cases.

\vspace{0.2cm}
\noindent {\bf Case $w \models \F\G\varphi'$}. Then $\setG(\G\varphi') = \setG(\varphi') \cup \{ \G \varphi'\}$, and so
\begin{equation}
\label{eq4}
\Omega(\G\varphi', w_{0i})  =  \Omega(\varphi', w_{0i}) \wedge \G\varphi' \wedge \bigwedge_{j=0}^{i} \aftg(\varphi', w_{ji}) 
\end{equation}
\noindent Now we get:
$$\begin{array}{rllcl}
&   & \aftg(\G\varphi', w_{0i}) \wedge \Omega(\G\varphi', w_{0i}) \\[0.2cm]
& = & \G \varphi' \wedge \left( \Omega(\varphi', w_{0i}) \wedge \G\varphi' \wedge \bigwedge_{j=0}^{i} \aftg(\varphi', w_{ji}) \right) & \hspace{0.5cm} & \mbox{by (\ref{eq2}), (\ref{eq4})} \\[0.2cm]
& \equiv_p & \G \varphi' \wedge \bigwedge_{j=0}^i \left( \Omega(\varphi', w_{ji}) \wedge \aftg(\varphi', w_{ji}) \right)& & \mbox{by (\ref{eq15})}\\[0.2cm]
& \models_p & \G \varphi' \wedge \bigwedge_{j=0}^i \aft(\varphi', w_{ji}) &  & \mbox{by ind. hyp.} \\[0.2cm]
& = & \aft(\G\varphi', w_{0i}) &  & \mbox{by (\ref{eq1})}
\end{array}$$

\noindent {\bf Case $w \not\models \F\G\varphi'$}. Then 
$\setG(\G\varphi') = \setG(\varphi')$, and therefore
\begin{equation}
\label{eq5}
\Omega(\G\varphi', w_{0i})  =  \Omega(\varphi', w_{0i}) \wedge \neg \G\varphi' 
\end{equation}

\noindent Now we get:
$$\begin{array}{rllcl}
&   & \aftg(\G\varphi', w_{0i}) \wedge \Omega(\G\varphi', w_{0i}) \\[0.2cm]
& = & \G \varphi' \wedge \left( \Omega(\varphi', w_{0i}) \wedge \neg \G\varphi'\right) & \hspace{0.5cm} & \mbox{by (\ref{eq2}), (\ref{eq5})} \\[0.2cm]
& \equiv_p & \false \\[0.2cm]
& \models_p & \aft(\G\varphi', w_{0i}) 
\end{array}$$
\noindent because $\false$ propositionally implies any formula. \qed\end{proof}

\begin{alemma}
\label{lem:master2}
Let $\varphi$ be a formula and let $w$ be a word such that $w \models \varphi$.  Let $\setG$ 
be the set of formulae $\G\psi \in \gsf(\varphi)$ such that 
$w \models \F\G\psi$.  Then 
$$\bigg(\bigwedge_{\G\psi \in \setG} \big( \G\psi \; \wedge \; \bigwedge_{j=0}^i \aftg(\psi,w_{ji}) \big) \; \wedge \bigwedge_{\G\psi\in\gsf(\varphi)\setminus\setG}\neg \G\psi \bigg)\;\; \models_p \;\; \aft(\varphi, w_{0i})$$
holds for almost every $i \in \mathbb{N}$.
\end{alemma}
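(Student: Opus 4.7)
The plan is to derive Lemma~\ref{lem:master2} as an almost immediate consequence of Lemma~\ref{lem:master1} combined with Lemma~\ref{lem:auxg}. Comparing the two statements, the only difference between the left-hand side of Lemma~\ref{lem:master1} and that of Lemma~\ref{lem:master2} is the extra conjunct $\aftg(\varphi, w_{0i})$. So it suffices to show that, for almost every $i$, this conjunct is already propositionally implied by the rest of the left-hand side.

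To make this precise, I would abbreviate the left-hand side of Lemma~\ref{lem:master2} by
$$\Omega_i \;:=\; \bigwedge_{\G\psi \in \setG}\!\Big(\G\psi \wedge \bigwedge_{j=0}^i \aftg(\psi, w_{ji})\Big) \;\wedge\; \bigwedge_{\G\psi \in \gsf(\varphi)\setminus\setG}\!\!\neg\G\psi,$$
and then invoke Lemma~\ref{lem:auxg}. Since $w \models \varphi$, that lemma furnishes a set $\setG' \subseteq \gsf(\varphi)$ with $w \models \F\G\psi$ for every $\G\psi \in \setG'$, and $\setG' \models_p \aftg(\varphi, w_{0i})$ for almost every $i$. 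Because the $\setG$ fixed in Lemma~\ref{lem:master2} is the \emph{maximal} such set, we have $\setG' \subseteq \setG$, hence by weakening $\bigwedge_{\G\psi \in \setG} \G\psi \models_p \aftg(\varphi, w_{0i})$ for almost every $i$. Since this big conjunction is a sub-conjunct of $\Omega_i$, I conclude $\Omega_i \models_p \aftg(\varphi, w_{0i})$ for almost every $i$.

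Next, Lemma~\ref{lem:master1} (which holds for every $i \in \mathbb{N}$, with the same $\setG$) gives
$\aftg(\varphi, w_{0i}) \wedge \Omega_i \models_p \aft(\varphi, w_{0i})$. Chaining the two implications yields, for almost every $i$,
$$\Omega_i \;\models_p\; \aftg(\varphi, w_{0i}) \wedge \Omega_i \;\models_p\; \aft(\varphi, w_{0i}),$$
which is exactly the statement of Lemma~\ref{lem:master2}.

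There is essentially no obstacle to overcome: all the genuine inductive reasoning has already been carried out in Lemmas~\ref{lem:master1} and~\ref{lem:auxg}. The only point requiring a moment's attention is the compatibility between the $\setG$ delivered existentially by Lemma~\ref{lem:auxg} and the specific maximal $\setG$ fixed in Lemma~\ref{lem:master2}; this is handled by the monotonicity argument above (more $\G$-subformulae on the left of $\models_p$ can only make the implication easier).
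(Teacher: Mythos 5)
Your proof is correct and follows essentially the same route as the paper: invoke Lemma~\ref{lem:master1} and then use Lemma~\ref{lem:auxg} to discharge the extra conjunct $\aftg(\varphi, w_{0i})$, which is propositionally implied by $\bigwedge_{\G\psi \in \setG} \G\psi$ for almost every $i$. Your explicit monotonicity remark reconciling the existentially quantified set from Lemma~\ref{lem:auxg} with the maximal $\setG$ is a point the paper glosses over, but it is handled correctly.
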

\begin{proof}
By Lemma \ref{lem:master1} we have 
$$\bigg(\aftg(\varphi, w_{0i}) \wedge \bigwedge_{\G\psi \in \setG} \big( \G\psi \; \wedge \; \bigwedge_{j=0}^i \aftg(\psi,w_{ji}) \big) \; \wedge \bigwedge_{\G\psi\in\gsf(\varphi)\setminus\setG}\neg \G\psi \bigg) \models_p \aft(\varphi, w_{0i}) $$
\noindent for  every $i \geq 0$. By Lemma \ref{lem:auxg}, and since $w \models \varphi$, we get
$\bigwedge_{\G\psi \in \setG} \G\psi \models_p \aftg(\varphi, w_{0i})$ for almost
every $i \geq 0$, and the result follows. 
\end{proof}

\begin{alemma}
\label{lem:master3}
Let $\varphi$ be a formula and let $w$ be a word. Let $\setG$ 
be the set of formulae $\G\psi \in \gsf(\varphi)$ such that 
$w \models \F\G\psi$.  Then $w \models \varphi$ if{}f 
$$\bigg( \bigwedge_{\G\psi \in \setG} \big( \G\psi \; \wedge \; \bigwedge_{j=\ind{w}{\psi}}^i \aftg(\psi,w_{ji}) \big) \wedge \bigwedge_{\G\psi\in\gsf(\varphi)\setminus\setG}\neg \G\psi \bigg) \; \models_p \; \aft(\varphi, w_{0i})$$
holds for almost every $i \in \mathbb{N}$.
\end{alemma}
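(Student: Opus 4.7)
The plan is to prove both directions separately, with the backward direction being conceptually simpler (a semantic argument) and the forward direction proceeding by structural induction on $\varphi$. Throughout, let $L(\varphi, w_{0i})$ abbreviate the conjunction on the left-hand side of the displayed implication.

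For the $(\Leftarrow)$ direction, I will show that $w_i$ itself satisfies $L(\varphi, w_{0i})$ for almost every $i$; once this is established, the assumed propositional implication yields $w_i \models \aft(\varphi, w_{0i})$, and Lemma~\ref{lem:fundaft} then gives $w \models \varphi$. Three types of conjuncts must be verified: $w_i \models \G\psi$ for $\G\psi \in \setG$ holds whenever $i \geq \ind{w}{\psi}$; $w_i \models \neg\G\psi$ for $\G\psi \in \gsf(\varphi) \setminus \setG$ holds for every $i$, since $w \not\models \F\G\psi$ forces $w_i \not\models \G\psi$ at every position; and $w_i \models \aftg(\psi, w_{ji})$ for $j \in [\ind{w}{\psi}, i]$ follows from a one-directional analog of Lemma~\ref{lem:fundaft} for $\aftg$, namely that $w_j \models \psi$ implies $w_i \models \aftg(\psi, w_{ji})$ whenever $i \geq j$. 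This analog is established by a routine structural induction on $\psi$, the only nontrivial case being $\G\chi$, which is handled by monotonicity ($w_j \models \G\chi \Rightarrow w_i \models \G\chi$ for $i \geq j$).

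For the $(\Rightarrow)$ direction, I will prove by induction on $\varphi$ that $w \models \varphi$ implies $L(\varphi, w_{0i}) \models_p \aft(\varphi, w_{0i})$ for almost every $i$. The key auxiliary fact, verified by direct inspection of conjuncts, is that for every subformula $\psi$ of $\varphi$ and every $k \leq i$, $L(\varphi, w_{0i})$ propositionally implies the analogous expression $L_k(\psi)$ formed with respect to the formula $\psi$ and the suffix word $w_k$. This follows from the identity $k + \ind{w_k}{\chi} = \max(k, \ind{w}{\chi}) \geq \ind{w}{\chi}$, which guarantees that the conjuncts of $L_k(\psi)$ form a subset of those of $L(\varphi, w_{0i})$. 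The inductive cases then go through by selecting an appropriate ``witness'' suffix: for $\F\psi$, take $k^* = \min\{k : w_k \models \psi\}$ and apply the IH on $\psi, w_{k^*}$ to derive $\aft(\psi, w_{k^*i})$, which is a disjunct of $\aft(\F\psi, w_{0i})$; for $\psi_1 \U \psi_2$, analogously use $k^*$ for $\psi_2$ while applying the IH also to each $\psi_1, w_j$ with $j < k^*$; for $\G\psi$, use that $w \models \G\psi$ implies $\ind{w}{\psi} = 0$ and $w_k \models \psi$ for every $k$, applying the IH at every $k \in [0, i]$; the Boolean and $\X$ cases are routine.

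The main obstacle lies in the $\G\psi$ case, where the induction hypothesis must be applied uniformly at every $k \in [0, i]$. The ``almost every $i$'' bound in the IH for $\psi, w_k$ depends on the values $\ind{w_k}{\chi}$ for $\G\chi$-subformulae of $\psi$, so one must ensure that a single threshold on $i$ works simultaneously for all $k$. This uniform bound is obtained by taking $i \geq \max_{\chi \in \gsf(\psi)} \ind{w}{\chi}$: for such $i$ and any $k \in [0, i]$, we have $k + \ind{w_k}{\chi} \leq \max(k, \ind{w}{\chi}) \leq i$, which is exactly the slack needed for the IH at $\psi, w_k$ to apply.
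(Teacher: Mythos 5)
Your backward direction is sound and is essentially the paper's: show that $w_i$ itself satisfies the left-hand side for all sufficiently large $i$ (using that $\aftg$ only omits conjuncts of $\aft$) and conclude with Lemma~\ref{lem:fundaft}. The forward direction, however, has a genuine gap in the $\G$ case. Your induction hypothesis for a subformula $\psi'$ and a suffix $w_k$ has as its left-hand side only the conjuncts contributed by the $\G$-subformulae of $\psi'$; it does not contain $\bigwedge_{j}\aftg(\psi',w_{ji})$ itself. Consequently the ``almost every $i$'' threshold of that instance is governed not only by the indices $\ind{w_k}{\chi}$ but also by the witness positions $k^*$ you pick in the $\F$ and $\U$ cases, and those are not bounded by $\max_{\chi}\ind{w}{\chi}$. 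Concretely, take $\varphi=\G\F a$ and a word $w$ in which $a$ holds exactly at the positions $2^n$. Then $\gsf(\F a)=\emptyset$, so the instance of your hypothesis for $\F a$ and $w_j$ reads $\true\models_p\aft(\F a,w_{ji})$, which requires $a$ to occur in $w[j]\cdots w[i]$; for $j=2^n+1$ and $i=2^{n+1}-1$ it is false even though $j\le i$ and $i$ is arbitrarily large. Hence ``applying the IH at every $k\in[0,i]$'' is not available for almost every $i$, and your uniform bound $i\ge\max_{\chi\in\gsf(\psi)}\ind{w}{\chi}$ does not repair this (in the example that bound is vacuous). Note that the lemma is still true in this example, because the needed conjuncts $\aftg(\F a, w_{ji})$ are already present on the left-hand side of the lemma; your induction simply fails to exploit them.

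The paper circumvents exactly this problem by proving a stronger inductive statement (Lemma~\ref{lem:master1}): the propositional implication is stated with the additional conjunct $\aftg(\varphi,w_{0i})$ on the left, it holds for \emph{every} $i$ and \emph{without} assuming $w\models\varphi$, and in the case $\varphi=\G\varphi'$ the term $\bigwedge_{j=0}^{i}\aftg(\varphi',w_{ji})$ supplied by the left-hand side is precisely what feeds the induction hypothesis uniformly for each $j\le i$ (likewise, the $\F$ case is handled disjunct-by-disjunct via $\aftg(\F\varphi',w_{0i})$ rather than via a witness position, which is why no threshold appears). The extra hypothesis $\aftg(\varphi,w_{0i})$ is then discharged using $w\models\varphi$ and Lemma~\ref{lem:auxg}, yielding Lemma~\ref{lem:master2}, and Lemma~\ref{lem:master3} follows by shifting to the suffix $w_k$ with $k=\max_{\G\psi\in\setG}\ind{w}{\psi}$ and using $\aft(\aft(\varphi,w_{0k}),w_{ki})=\aft(\varphi,w_{0i})$. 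To repair your argument you would need to strengthen your induction statement in the same way.
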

\begin{proof}

Let $k$ be the maximum of ${\it ind}(w,\psi)$
over all $\G\psi \in \setG$. In particular, we have $w_k \models \G\psi$ for every $\G\psi \in \setG$.

\noindent ($\Leftarrow$): Assume the expression holds for almost every $i \in \mathbb{N}$.
We first claim that
$$w_i \models \bigwedge_{\G\psi \in \setG} \big( \G\psi \; \wedge \; \bigwedge_{j=\ind{w}{\psi}}^i \aftg(\psi,w_{ji})\big) \;  \wedge \bigwedge_{\G\psi\in\gsf(\varphi)\setminus\setG}\neg \G\psi$$
\noindent holds for every $i \geq k$.

By the definition of $k$, we have $w_i \models \G\psi$ for every $\G\psi \in \setG$ and every $i \geq k$. Applying Lemma~\ref{lem:fundaft}, 
we get $w_{i} \models \aft(\psi, w_{ji})$ for every $\ind{w}{\psi} \leq j \leq i$, and so
$w_i \models \bigwedge_{j = \ind{w}{\psi}}^i \aft(\varphi, w_{ji})$. 
Since $\aftg$ is obtained from $\aft$ by possibly omitting some conjuncts, we also have 
$w_i \models \bigwedge_{j = \ind{w}{\psi}}^i \aftg(\varphi, w_{ji})$. Finally, for every
$\G\psi \in \gsf(\varphi)\setminus\setG$ we have $w \not\models \F\G\psi$, and so $w_i \models \neg \G\psi$.
This proves the claim.

By this claim and the assumption of the lemma, we have $w_i \models \aft(\varphi, w_{0i})$ for 
almost every $i \geq \mathbb{N}$, and
so $w \models \varphi$ by Lemma~\ref{lem:fundaft}.

\noindent ($\Rightarrow$): Assume $w \models \varphi$. By Lemma~\ref{lem:fundaft}
we have $w_k \models \aft(\varphi, w_{0k})$. Since $w_k \models \G\psi$ for every $\G\psi \in \setG$, 
we can apply Lemma \ref{lem:master2} to the formula $\aft(\varphi, w_{0k})$ and the word $w_k$, which by $\setG(\aft(\varphi, w_{0k}))=\setG(\varphi)=\setG$ yields that
$$\bigg(\bigwedge_{\G\psi \in \setG} \big( \G\psi \; \wedge \; \bigwedge_{j=k}^i \aftg(\psi,w_{ji}) \big) \; \wedge \bigwedge_{\G\psi\in\gsf(\varphi)\setminus\setG}\neg \G\psi \bigg)\;\; \models_p \;\; \aft(\aft(\varphi, w_{0k}), w_{ki})$$
\noindent holds for almost every $i \geq 0$. The result follows from $\aft(\aft(\varphi, w_{0k}), w_{ki}) = \aft(\varphi, w_{0i})$,
and the fact that $k \geq \ind{w}{\psi}$ for every $\G\psi \in \setG$.
\qed\end{proof}

Lemma \ref{lem:master3} is a purely logical statement concerning only the logic LTL, and independent of any
automata-theoretic considerations. In order to obtain Theorem \ref{thm:master} we use Lemma \ref{lem:interpret},
which interprets the conjunction $\bigwedge_{j=\ind{w}{\psi}}^i \aftg(\psi,w_{ji})$ as the
formula $\rankf{\psi,w_{0i}}$ given by the acceptance rank of the run of $\A(\psi)$ on $w$.

\begin{reftheorem}{thm:master}
Let $\varphi$ be a formula and let $w$ be a word. Let $\setG$ 
be the set of formulae $\G\psi \in \gsf(\varphi)$ such that 
$w \models \F\G\psi$. Then $w \models \varphi$ if{}f 
$$\bigg(\bigwedge_{\G\psi \in \setG} \big( \G\psi \; \wedge \; \rankf{\psi,w_{0i}}\big) \; \wedge \bigwedge_{\G\psi\in\gsf(\varphi)\setminus\setG}\neg \G\psi \bigg) \; \models_p \; \aft(\varphi, w_{0i})$$
holds for almost every $i \in \mathbb{N}$.
\end{reftheorem}
\begin{proof}
Follows from Lemma \ref {lem:master3} and Lemma \ref{lem:interpret}.
\qed\end{proof}

\begin{reftheorem}{thm:finalthm}
For any LTL formula $\varphi$, $\Lang(\A(\varphi))=\Lang(\varphi)$. 
\end{reftheorem}
\begin{proof}
($\Leftarrow$): If $w \models \varphi$, then let $\setG \subseteq \gsf(\varphi)$ be the set
of $\G$-subformulae $\G\psi$ of $\varphi$ such that $w \models \F\G\psi$, and let
$\pi$ be the mapping that assigns to every $\psi$ the rank $\pi(\psi)$ at which $w$ is accepted.
Further, let $\mathcal F(r_\psi,v)$ be the conjunction of the states of $\K(\psi)$ to which
$r_{\psi}$ assigns rank $\pi(\psi)$ or higher after reading $v$. 
By Theorem \ref{thm:master}
$$\bigg(\bigwedge_{\G\psi \in \setG} \big( \G\psi \; \wedge \; \mathcal F(r_\psi,w_{0i}) \big) 
\; \wedge \bigwedge_{\G\psi\in\gsf(\varphi)\setminus\setG}\neg \G\psi \bigg)
\models_p \aft(\varphi, w_{0i}) \ .$$
\noindent for almost every $i$. Since $\aft(\varphi, w_{0i})$ is the state reached by $\M(\varphi)$ after reading $w_{0i}$,
$\A(\varphi)$ accepts.

($\Rightarrow$): If $\A(\varphi)$ accepts, then it does so for a particular set $\setG \subseteq \gsf(\varphi)$
and ranking $\pi$. By Theorem \ref{thm:decomp2}, we have $w \models \F\G\psi$ for every $\G\psi \in \setG$.
By the definition of the accepting condition,
$$\bigwedge_{\G\psi \in \setG} \big( \G\psi \; \wedge \; \mathcal F(r_\psi,w_{0i}) \big) \; \wedge \; \bigwedge_{\G\psi\in\gsf(\varphi)\setminus\setG}\neg \G\psi 
\models_p \aft(\varphi, w_{0i})$$
\noindent holds for almost every $i$. By Theorem \ref{thm:master}, $w \models \varphi$.
\end{proof}

%

\section{Further Experiments}\label{app:exp}

All automata were constructed within times of order of seconds, except for ltl2dstar where automata with thousands of states took several minutes and automata with hundreds of thousands of states took several hours. The timeout was set to one day except for the conjunction of 3 fairness constraints where it took a bit more than a day. Timeouts are denoted by ? and not applicability of the tool by $-$.

The first set of formulae is from the LTL$(\F,\G)$ fragment. The upper part comes from EEM (BEnchmarks for Explicit Model checkers)\cite{beem}, the lower from \cite{DBLP:conf/cav/SomenziB00} on which ltl2dstar was originally tested~\cite{DBLP:journals/tcs/KleinB06}. There are overlaps between the two sets. Note that the formula $(\F\F a\wedge \G\neg a)\vee(\G\G\neg a\wedge\F a)$ is a contradiction. Our method usually achieves the same results as the optimized LTL3DRA outperforming the first two approaches. 

{\footnotesize
$$\begin{array}{lrrrr}
 \text{Formula }	&	 \text{ltl2dstar}	&	\text{R.1}	&	\text{LTL3DRA}	&	\text{R.3}					\\	
 &\text{DRA}&\text{DRA}&\text{tGDRA}&\text{tGDRA}\\\hline
\G(a\vee\F b) 	&	4	&	4	&	2	&	2					\\	
\F\G a\vee \F\G b\vee \G\F c 	&	8	&	8	&	1	&	1					\\	
\F(a\vee b) 	&	2	&	2	&	2	&	2					\\	
\G\F(a\vee b) 	&	2	&	2	&	1	&	1					\\	
\G(a\vee b\vee c) 	&	3	&	2	&	2	&	2					\\	
\G(a\vee\F(b\vee c)) 	&	4	&	4	&	2	&	2					\\	
\F a\vee\G b 	&	4	&	3	&	3	&	3					\\	
\G(a\vee\F(b\wedge c)) 	&	4	&	4	&	2	&	2					\\	
(\F\G a\vee \G\F b) 	&	4	&	4	&	1	&	1					\\	
\G\F(a\vee b)\wedge \G\F(b\vee c) 	&	7	&	3	&	1	&	1					\\	\hline
(\F\F a\wedge \G\neg a)\vee(\G\G\neg a\wedge\F a) 	&	1	&	0	&	1	&	2					\\	
(\G\F a)\wedge\F\G b 	&	3	&	3	&	1	&	1					\\	
(\G\F a\wedge\F\G b)\vee(\F\G\neg a\wedge \G\F\neg b) 	&	5	&	4	&	1	&	1					\\	
\F\G a\wedge\G\F a 	&	2	&	2	&	1	&	1					\\	
\G(\F a \wedge \F b) 	&	5	&	3	&	1	&	3					\\	
\F a \wedge \F \neg a 	&	4	&	4	&	4	&	4					\\	
(\G(b\vee\G\F a)\wedge\G(c\vee\G\F\neg a))\vee\G b\vee\G c 	&	13	&	18	&	4	&	4					\\	
(\G(b\vee\F\G a)\wedge\G(c\vee\F\G\neg a))\vee\G b\vee\G c 	&	14	&	6	&	4	&	4					\\	
(\F(b\wedge\F\G a)\vee\F(c\wedge\F\G\neg a))\wedge\F b\wedge\F c 	&	7	&	5	&	4	&	4					\\	
(\F(b\wedge\G\F a)\vee\F(c\wedge\G\F\neg a))\wedge\F b\wedge\F c 	&	7	&	5	&	4	&	4	
\end{array}$$
}

The next set of LTL$(\F,\G)$ formulae are formulae whose satisfaction does not depend on any finite prefix of the word. They describe only ``infinitary'' behaviour. In this case, the master automaton has only one state. While DRA need to remember the last letter read, the transition-based acceptance together with the generalized acceptance condition allow tGDRA not to remember anything. Hence the number of states is $1$. The first two parts were used in \cite{cav12,DBLP:conf/atva/BabiakBKS13} and the third part in \cite{DBLP:conf/atva/BabiakBKS13}. The first part focuses on properties with fairness-like constraints. 

{\footnotesize
$$\begin{array}{lrrrr}
 \text{Formula }	&	 \text{ltl2dstar}	&	\text{R.1}	&	\text{LTL3DRA}	&	\text{R.3}					\\	
 &\text{DRA}&\text{DRA}&\text{tGDRA}&\text{tGDRA}\\\hline
(\F\G a\vee\G\F b) 	&	4	&	4	&	1	&	1					\\	
(\F\G a\vee\G\F b)\wedge(\F\G c\vee\G\F d) 	&	11\,324	&	18	&	1	&	1					\\	
\bigwedge_{i=1}^3(\G\F a_i\rightarrow\G\F b_i)	&	1\,304\,706	&	462	&	1	&	1					\\	
\G\F(\F a\vee\G\F b\vee\F\G(a\vee b)) 	&	14	&	4	&	1	&	1					\\	
\F\G(\F a\vee\G\F b\vee\F\G(a\vee b)) 	&	145	&	4	&	1	&	1					\\	
\F\G(\F a\vee\G\F b\vee\F\G(a\vee b)\vee\F\G b) 	&	181	&	4	&	1	&	1					\\	\hline
(\G \F a \vee \F \G b) 	&	4	&	4	&	1	&	1					\\	
(\G \F a \vee \F \G b) \wedge (\G \F b \vee \F \G c)	&	572	&	11	&	1	&	1					\\	
(\G \F a \vee \F \G b) \wedge (\G \F b \vee \F \G c) \wedge (\G \F c \vee \F \G d)	&	290\,046	&	52	&	1	&	1					\\	
(\G \F a \vee \F \G b) \wedge (\G \F b \vee \F \G c) \wedge (\G \F c \vee \F \G d) \wedge (\G \F d \vee \F \G h)	&	?	&	1288	&	1	&	1					\\	
\end{array}$$
}

In the next table, we have formulae of LTL$_{\setminus\G\U}$ used in \cite{atva13}. The first part comes mostly from the same sources and \cite{DBLP:conf/concur/EtessamiH00}. The second part is considered in \cite{atva13} in order to demonstrate the difficulties of the standard approach to handle
\begin{enumerate}
 \item many $\X$ operators inside the scope of other temporal operators, especially $\U$, where the slaves are already quite complex, and
 \item conjunctions of liveness properties where the efficiency of generalized Rabin acceptance condition may be fully exploited.
\end{enumerate}

{\footnotesize
$$\begin{array}{lrrrr}
 \text{Formula }	&	 \text{ltl2dstar}	&	\text{R.2}	&	\text{LTL3DRA}	&	\text{R.3}					\\	
 &\text{DRA}&\text{DRA}&\text{tGDRA}&\text{tGDRA}\\\hline
(\F p) \U (\G q)	&	4	&	3	&	2	&	2					\\	
(\G p) \U q	&	5	&	5	&	5	&	5					\\	
(p \vee q) \U p \vee \G q	&	4	&	3	&	3	&	3					\\	
\G(!p \vee \F q) \wedge ((\X p)\U q \vee \X((!p \vee !q)\U !p \vee \G(!p \vee !q))) 	&	19	&	8	&	-	&	5					\\	
\G(q \vee \X \G p) \wedge \G(r \vee \X \G !p)	&	5	&	14	&	4	&	5					\\	
(\X(\G r \vee r \U (r \wedge s \U p)))\U(\G r \vee r \U (r\wedge s))	&	18	&	9	&	8	&	8					\\	
p \U(q\wedge \X(r\wedge (\F(s\wedge \X(\F(t\wedge \X(\F(u\wedge \X \F v))))))))	&	9	&	13	&	13	&	13					\\	\hline
(\G \F (a \wedge \X \X b) \vee \F \G b) \wedge \F \G (c \vee (\X a \wedge \X \X b))	&	353	&	73	&	-	&	12					\\	
\G \F (\X \X \X a \wedge \X \X \X \X b) \wedge \G \F (b \vee \X c) \wedge \G \F (c \wedge \X \X a)	&	2\,127	&	169	&	-	&	16					\\	
(\G \F a \vee \F \G b) \wedge (\G \F c \vee \F \G (d \vee \X e))	&	18\,176	&	80	&	-	&	2					\\	
(\G \F (a \wedge \X \X c) \vee \F \G b) \wedge (\G \F c \vee \F \G (d \vee \X a \wedge \X \X b))	&	?	&	142	&	-	&	12					\\	
a \U b \wedge (\G \F a \vee \F \G b) \wedge (\G \F c \vee \F \G d) \vee 	&	640\,771	&	210	&	8	&	7	\\
\qquad\qquad\vee a \U c \wedge (\G \F a \vee \F \G d) \wedge (\G \F c \vee \F \G b)
\end{array}$$
}

The following randomly picked two examples illustrate the same two phenomena as in the previous table now on general LTL formulae.

{\footnotesize
$$\begin{array}{lrrrr}
 \text{Formula }	&	 \text{ltl2dstar}	&	\text{R.1/2}	&	\text{LTL3DRA}	&	\text{R.3}					\\	
 &\text{DRA}&\text{DRA}&\text{tGDRA}&\text{tGDRA}\\\hline
\F \G ((a \wedge \X \X b \wedge \G \F b)\U(\G(\X \X ! c \vee \X \X (a\wedge b))))	&	2\,053	&	-	&	-	&	11					\\	
\G(\F !a \wedge \F (b \wedge \X !c ) \wedge \G \F (a \U d)) \wedge \G \F ((\X d)\U(b \vee \G c))	&	283	&	-	&	-	&	7					\\	
\end{array}$$
}

The last set contains 
two examples of formulae from a network monitoring project Liberouter (https://www.liberouter.org/).
The subsequent 5 more complex formulae are from \textsc{Spec Pattern} \cite{DBLP:conf/icse/DwyerAC99} (available at \cite{specpatterns}) and express the following ``after Q until R'' properties:
\begin{enumerate}
 \item[$\varphi_{35}:$] $\G (!q \vee (\G p \vee (!p \U (r \vee (s \wedge !p \wedge \X(!p \U t))))))$
 \item[$\varphi_{40}:$] $\G(!q \vee (((!s \vee r) \vee \X(\G(!t\vee r) \vee !r \U (r\wedge (!t \vee r)))) \U (r \vee p) \vee \G((!s \vee \X \G !t))))$
 \item[$\varphi_{45}:$] $\G (!q \vee (!s \vee \X(\G !t \vee !r \U (r\wedge !t)) \vee \X(!r \U (r \wedge \F p))) \U (r \vee \G (!s \vee \X(\G !t \vee !r \U (r\wedge !t)) \vee \X(!r \U (t \wedge \F p)))))$
 \item[$\varphi_{50}:$] $\G (!q \vee (!p \vee (!r \U (s \wedge !r \wedge \X(!r \U t)))) \U (r \vee \G (!p \vee (s \wedge \X \F t))))$
 \item[$\varphi_{55}:$] $\G (!q \vee (!p \vee (!r \U (s \wedge !r \wedge !z \wedge \X((!r \wedge !z) \U t)))) \U (r \vee \G (!p \vee (s \wedge !z \wedge \X(!z \U t)))))$
\end{enumerate}

{\footnotesize
$$\begin{array}{lrrrr}
 \text{Formula }	&	 \text{ltl2dstar}	&	\text{R.1/2}	&	\text{LTL3DRA}	&	\text{R.3}					\\	
 &\text{DRA}&\text{DRA}&\text{tGDRA}&\text{tGDRA}\\\hline
\G (((!p1))\wedge (p2 \U ((!p2) \U ((!p3) \vee p4))))	&	7	&	-	&	-	&	4					\\
 \G (((p1) \wedge \X ! p1) \vee \X (p1 \U (((!p2) \wedge p1) \wedge 	&	8	&	-	&	-	&	8					\\
\quad\quad \X (p2 \wedge p1 \wedge (p1 \U (((!p2) \wedge p1) \wedge \X (p2 \wedge p1))))))) \\\hline
\varphi_{35}:\text{2 cause-1 effect precedence chain}&	6	&	-	&	-	&	6					\\	
\varphi_{40}:\text{1 cause-2 effect precedence chain}&	314	&	-	&	-	&	32					\\	
\varphi_{45}:\text{2 stimulus-1 response chain}&	1450	&	-	&	-	&	78					\\	
\varphi_{50}:\text{1 stimulus-2 response chain}&	28	&	-	&	-	&	23						
\\\varphi_{55}:\text{1-2 response chain constrained by a single proposition}&	28	&	-	&	-	&	23		
\end{array}$$
}

\end{document}